\let\NAT@parse\undefined
\newtheorem{theorem}{Theorem}
\newtheorem{proposition}{Proposition}
\newtheorem{definition}{Definition}
\newtheorem{remark}{Remark}
\title{Improving Sequential Market Coordination via Value-oriented Renewable Energy Forecasting}
\author{Yufan Zhang, Honglin Wen~\IEEEmembership{Member,~IEEE,} Yuexin Bian, Yuanyuan Shi~\IEEEmembership{Member,~IEEE.}
\vspace{-3em}
\thanks{Yufan Zhang is with the College of Engineering, Cornell University and was with the Department of Electrical and Computer Engineering, University of California, San Diego. \\
\indent Honglin Wen is with the Department of Electrical Engineering, Shanghai Jiao Tong University, and Dyson School of Design Engineering, Imperial College London. \\
\indent Yuexin Bian and Yuanyuan Shi are with the Department of Electrical and Computer Engineering, University of California San Diego.}
}
\begin{document}

\maketitle
\thispagestyle{empty}
\pagestyle{plain}

\begin{abstract}
Large penetration of renewable energy sources (RESs) brings huge uncertainty into the electricity markets. The current deterministic clearing approach in the day-ahead (DA) market, where RESs participate based on expected production, has been criticized for causing a lack of coordination between the DA and real-time (RT) markets, leading to high overall operating costs. Previous works indicate that improving day-ahead RES entering quantities can significantly mitigate the drawbacks of deterministic clearing. In this work, we propose using a trained forecasting model, referred to as \emph{value-oriented forecasting}, to determine RES Improved Entering Quantities (RIEQ) more efficiently during the operational phase. Unlike traditional models that minimize statistical forecasting errors, our approach trains model parameters to minimize the expected overall operating costs across both DA and RT markets. We derive the exact form of the loss function used for training, which becomes piecewise linear when market clearing is modeled by linear programs. Additionally, we provide the analytical gradient of the loss function with respect to the forecast, enabling an efficient training  strategy. Numerical studies demonstrate that our forecasts significantly reduce overall operating costs for deterministic market clearing compared to conventional forecasts based on expected RES production.


Keywords: Energy forecasting, Surrogate Loss, Forecast value, Market clearing, Decision-focused learning

\end{abstract}

\section{Introduction}
The current short-term electricity markets are organized in a sequence of trading floors, i.e., day-ahead (DA) and real-time (RT) markets \cite{morales2013integrating}. A DA market is cleared 12-36 hours before the actual operation. A RT market runs close to the delivery time and addresses any imbalance from the DA schedules. They are initially designed for controllable fossil-fueled generators in the view of traditional power system operation. However, the increasing share of renewable energy sources (RESs) (up to 30\% of global electricity generation in 2022 \cite{RES}) exposes the electricity markets to significant uncertainty and therefore raises concerns to the market operation \cite{morales2013integrating}.


A significant challenge with sequential deterministic market clearing arises from its limited ability to coordinate the two markets under large RES uncertainty. Specifically, DA clearing decisions are made under RES uncertainty, influencing RT market clearing, which subsequently manages power imbalances arising from discrepancies between RES schedules and realizations. Currently, the DA market is cleared without accounting for its impact on RT market clearing. As a result, DA and RT markets operate without coordination, leading to high overall operating costs \cite{kazempour2017value}. 
Consequently, stochastic market clearing has been proposed, which informs the DA market of the operating cost in the RT market \cite{morales2012pricing}. 
While stochastic market clearing can improve overall economic efficiency, it struggles to simultaneously maintain key market properties, such as revenue adequacy and cost recovery \cite{zavala2017stochastic}. To address this, efforts have been made to preserve these desirable properties within stochastic market clearing. For instance, \cite{kazempour2018stochastic} guarantees cost recovery and revenue adequacy both per scenario and in expectation, though this comes at the cost of reduced market efficiency.

Previous studies have shown that combining DA deterministic market clearing with RES improved entering quantities (RIEQ) enhances coordination between DA and RT markets, achieving performance comparable to stochastic clearing \cite{morales2014electricity,zhao2022uncertainty}. The key idea is to tactically determine RIEQ, which is made under uncertainty and used as input parameters in DA clearing, by considering the impact of DA clearing results on future RT market clearing. For instance, studies \cite{morales2014electricity,zhao2022uncertainty} optimize RIEQ by solving a bilevel stochastic program on a case-by-case basis during the operational phase. Similarly, \cite{dvorkin2018setting} applies these methods to predict reserve requirements, facilitating coordination of reserve and energy markets. While these approaches hold promise for improving market coordination, they introduce computational challenges in determining RIEQ.

Building on this foundation, we aim to explore an important question: \emph{{What if the RES entering quantities in the DA market were improved as RIEQ by a ``properly'' trained forecasting model, specifically optimized to minimize overall system operating costs?} If so, could the sequential markets be better coordinated?} Here, the overall system operating cost in both the DA and RT markets is used as the quantitative metric to evaluate market coordination.

With the trained forecasting model for determining RIEQ, the computational burden at the operational phase  \cite{morales2014electricity,zhao2022uncertainty} can be avoided. However, aligning the training objective of a forecasting model with the operational goal is a significant challenge, falling within the realm of value-oriented forecasting \cite{stratigakos2022prescriptive,chen2021feature,morales2023prescribing}. Several research threads have emerged to address this challenge, encompassing integrated optimization, differentiable programming, and the loss function design.

For the first thread of research on integrated optimization, forecasting model parameters are optimized concurrently with decision variables \cite{chen2021feature,garcia2021application}. In the same vein, \cite{morales2023prescribing} introduces a bilevel program where DA market clearing forms the lower level with the RIEQ prediction as a parameter, while the upper level optimizes both model parameters and RT decisions.  This method requires a linear forecasting model to make the program solvable by commercial optimization solvers, which may limit the performance.
The second thread of research, on differentiable programming, enables the use of more advanced forecasting models, such as neural networks, by deriving the gradient of decision solution w.r.t. the forecast \cite{donti2017task,wahdany2023more}. However, obtaining such a gradient involves repeatedly solving the inverse of a large-size matrix resulting from the Karush–Kuhn–Tucker conditions, making it computationally expensive. In a general sense, the first two threads implicitly design value-oriented loss functions by aligning forecasting model training with decision-making value, without explicitly formulating a loss function that links forecasts to the decision-making objective. The third approach, and the focus of our work, centers on explicitly designing value-oriented loss functions. Compared to the implicit loss function, the explicit loss function has a clearer structure, which enhances the explainability of value-oriented forecasts \cite{zhang2024valueoriented}. The existing ones, such as pinball loss \cite{pinson2007trading} or Smart ``Predict, then Optimize" (SPO) loss \cite{elmachtoub2022smart}, are primarily designed for single-stage stochastic programs  (e.g., the one without considering flexible unit redispatch in the RT market). Thus, how to develop a value-oriented loss function for sequential market clearing remains an open question.

To address the challenge of training the RIEQ forecasting model, we propose a tailored loss function that uses the RIEQ forecast as the input and the overall DA and RT system operating costs as the output. The forecasting model parameter estimation is formulated as a bilevel program: the upper level optimizes forecasting model parameters, while the lower level solves the DA and RT market clearing problems using the RIEQ forecast as the input parameter. To link operating cost with the RIEQ forecast, we leverage the lower-level \emph{dual} problems and replace the upper-level objective with dual objectives. The reformulated upper-level objective is then transformed into an analytical function of the forecast, serving as the loss function for training. This requires the derivation of functions that connect the forecast with primal and dual solutions. Specifically, the dual solutions are linked to forecasts through the dual problems of DA and RT market clearing, while the functions for primal solutions are derived from the active constraints of their respective primal problems. Substituting these functions into the upper-level objective yields a value-oriented loss function for training. Our main contributions are twofold: from the market perspective and from the methodological perspective of obtaining RIEQ,

  1) From the market perspective, the proposed approach maintains deterministic DA market clearing while improving sequential market coordination with RIEQ forecasts. Additionally, we theoretically demonstrate that deterministic DA market clearing with RIEQ forecasts maintains key market properties, namely cost recovery and revenue adequacy.

  2) From the methodological perspective of obtaining RIEQ, we propose determining RIEQ via a trained forecasting model at the operational phase. During the training phase, we analytically derive a value-oriented loss function that aligns the forecasting model’s training objective with the operational value, aiming to minimize overall system operating costs in both DA and RT markets. We reveal that the loss function is piecewise linear for market clearing problems modeled by linear programs, with each piece associated with active constraints. Leveraging insights from the loss function structure, we propose a computationally efficient training approach for gradient descent-based methods, which recalculates the gradient only when encountering new active constraints.

  
  

  The above contributions distinguish this work from our prior study \cite{zhang2024valueoriented}. This work directly addresses the coordination challenge posed by RESs in electricity markets. Additionally, the method for deriving the loss function is specifically tailored to the market clearing problem and is more generalized compared to the approaches in \cite{zhang2024valueoriented}. Unlike the previous work \cite{zhang2024valueoriented}, where the DA and RT operating problems are connected solely through the forecasts, in this work, the DA and RT market clearing are linked through the DA decisions, such as the schedules of traditional and renewable generators, which are implicitly influenced by the forecasts. This is specific to the market setting, as the schedules of flexible resources are determined in the DA market, while their flexibility enables them to make RT adjustments based on the DA schedules to correct RT power imbalances. The influence of the forecasts on the RT clearing is indirect, since RES forecasts are not explicitly included in the RT clearing process. The method for deriving the loss function in this work is specifically designed to account for this implicit impact, extending beyond the approach proposed in \cite{zhang2024valueoriented}.

Also, it is worth mentioning that this work has not fully addressed all the challenges of implementing RIEQ forecasts in practical market clearing, such as how to align RIEQ with RES producer offering quantities. Our primary goal is to demonstrate that RIEQ can be determined via trained forecasting models at the operational phase, making it computationally convenient. Additionally, the RIEQ can enhance the coordination between day-ahead and real-time markets, thereby reducing overall operating costs while preserving key market properties, such as cost recovery and revenue adequacy.

The remaining parts of this paper are organized as follows. The preliminaries regarding the sequential market clearing are given in Section \uppercase\expandafter{\romannumeral2}. Section \uppercase\expandafter{\romannumeral3} formulates a bilevel program for forecasting model parameter estimation. Section \uppercase\expandafter{\romannumeral4} derives the loss function for value-oriented forecasting and the training process is presented in Section \uppercase\expandafter{\romannumeral5}. Results are discussed and evaluated in Section \uppercase\expandafter{\romannumeral6}, followed by the conclusions.

\textbf{Notations:} 
The notation $\bm{X}[\mathcal{J}]$ signifies the sub-matrix comprised of rows from the matrix $\bm{X}$ whose indices are included in the index set $\mathcal{J}$. Considering the column vectors $\bm{x}_1$ and $\bm{x}_2$, the expression $\bm{x}=[\bm{x}_1;\bm{x}_2]$ indicates the vertical concatenation of $\bm{x}_1$ and $\bm{x}_2$, resulting in the formation of $\bm{x}$. The operator $\Pi_{\mathcal{J}}$ is used to extract a segment from the vector, such as retrieving $\bm{x}_1$ from $\bm{x}$.

\begin{table*}[ht]
\centering
\caption{Nomenclature}
\label{tab:nomenclature}
\renewcommand{\arraystretch}{1.15}
\setlength{\tabcolsep}{6pt} 

\newcolumntype{L}{>{\raggedright\arraybackslash}p{5cm}} 
\newcolumntype{R}{>{\raggedright\arraybackslash}p{5cm}} 

\begin{tabular}{ll|ll}
\toprule
\multicolumn{2}{l}{\textbf{Sets}} \\
\midrule
$\mathcal{J}^a_{DA,d}$ & \makecell[l]{Row index set of active constraints of day-ahead clearing\\ on day $d$} & $\mathcal{J}^a_{RT,d,\tau}$ & \makecell[l]{Row index set of active constraints of real-time clearing\\ at time $\tau$ on day $d$.}\\
\midrule
\multicolumn{2}{l}{\textbf{Variables}} \\
\midrule
$\bm{p}_{d,\tau}$ & Traditional generator day-ahead schedule at time $\tau$ on day $d$.
& $\bm{w}_{d,\tau}$ & Renewable generator day-ahead schedule at time $\tau$ on day $d$. \\
$\bm{p}_{d,\tau}^+$ & \makecell[l]{Traditional generator real-time adjustment for up-regulation\\ at time $\tau$ on day $d$.}
& $\bm{p}_{d,\tau}^-$ & \makecell[l]{Traditional generator real-time adjustment for down-regulation\\ at time $\tau$ on day $d$.}  \\
$\bm{\kappa}_{d,\tau}$ & Renewable generation spill at time $\tau$ on day $d$.
& $\bm{x}_{d}$ & Collection of day-ahead decisions on day $d$. \\
$\bm{z}_{d,\tau}$ & Collection of real-time decisions at time $\tau$ on day $d$.
& $\Theta$ & Forecasting model parameter.\\
$\bm{p}_{d,\tau}^{+-}$ & \makecell[l]{Collection of real-time up-/down- regulation adjustment\\ at time $\tau$ on day $d$.} & $\bm{\sigma}_{d}$ & day-ahead dual decisions on day $d$.\\
$\bm{\nu}_{d,\tau}$ & \makecell[l]{Real-time dual decision at time $\tau=1$ on day $d$.} & $\bm{\xi}_{d,\tau}$ & \makecell[l]{Real-time dual decision at time $\tau=2,...,T$ on day $d$.}\\
\midrule
\multicolumn{2}{l}{\textbf{Parameters}} \\
\midrule
$\bm{\rho}$ & Marginal day-ahead generation cost.
& $\bm{l}_{d,\tau}$ & Electricity load at time $\tau$ on day $d$. \\
$\overline{\bm{f}}$ & Transmission line capacity
& $\overline{\bm{p}}$ & Max day-ahead output of traditional generator  \\
$\overline{\bm{r}}$ & Max ramping capacity of traditional generator
& $\bm{\rho}_+$ & Marginal real-time generation cost for up-regulation. \\
$\bm{\rho}_-$ & Marginal real-time generation utility for down-regulation.
& $\bm{y}_{d,\tau}$ & Renewable generation realization \\
$\bm{H}$ & Power transfer distribution factors 
& $\bm{s}_{d,\tau}$ & Forecasting context \\
$\overline{\bm{y}}_{d,\tau}$ & Renewable energy capacity
  \\
\bottomrule
\end{tabular}
\end{table*}

\section{Preliminaries}
The framework and mathematical formulation of sequential market clearing are introduced in subsection \ref{Sequential Market Clearing}, and the reformulation is presented in subsection \ref{Mathematical Reformulation}.



\subsection{Sequential Market Clearing}\label{Sequential Market Clearing}

We consider the sequential clearing of DA and RT markets \cite{morales2013integrating}. The DA market is cleared at time $t$ on day $d-1$, with an advance of $k$ hours in time to the next day $d$, and covers energy transactions on day $d$, typically on an hourly basis. RES production is uncertain during the DA market, where the RES forecast acts as the RIEQ. Due to inevitable forecasting errors, the energy imbalance caused by forecasting errors needs to be settled in the RT market. Concretely, in the DA market, the operator determines the schedules of generators and RES to satisfy inelastic demand. The generation and RES schedule for each time-slot $\tau,\forall \tau=1,...,24$ on the next day $d$ are denoted as $\bm{p}_{d,\tau}$ and $\bm{w}_{d,\tau}$, respectively. The DA market clearing is,
\begin{subequations}\label{DA}
\begin{alignat}{2}
& \mathop{\min}_{\bm{x}_d}   &&\sum_{\tau=1}^T\bm{\rho}^\top\bm{p}_{d,\tau}\label{DAa}\\ 
    & \text{s.t.} &&  \bm{1}^\top(\bm{p}_{d,\tau}+\bm{w}_{d,\tau})=\bm{1}^\top\bm{l}_{d,\tau}:\gamma_{d,\tau},\forall \tau=1,...,T\label{DAb}
    \\ 
    &&&  
    -\overline{\bm{f}}\leq \bm{H}(\bm{p}_{d,\tau}+\bm{w}_{d,\tau}-\bm{l}_{d,\tau})\leq\overline{\bm{f}}:\underline{\mu}_{d,\tau},\overline{\mu}_{d,\tau},\nonumber\\
    &&&\forall \tau=1,...,T\label{DAc}\\
    &&&  0 \leq \bm{p}_{d,\tau} \leq \overline{\bm{p}},\forall \tau=1,...,T \label{DAd}\\
    &&& -\overline{\bm{r}} \leq \bm{p}_{d,\tau}-\bm{p}_{d,\tau-1} \leq \overline{\bm{r}},\forall \tau=2,...,T\label{DAe}\\
    &&& 0 \leq \bm{w}_{d,\tau} \leq \hat{\bm{y}}_{d,\tau},\forall \tau=1,...,T, \label{DAf}
\end{alignat}
\end{subequations}
where $\bm{x}_d=[\bm{x}_{d,\tau}]_{\tau=1}^T=[\bm{p}_{d,\tau}\in \mathbb{R}^N;\bm{w}_{d,\tau}\in \mathbb{R}^N]_{\tau=1}^T$ is the collection of DA decision variables, and $N$ is the number of nodes in the system. The optimal solution is denoted as $\bm{x}_d^*=[\bm{x}_{d,\tau}^*]_{\tau=1}^T=[\bm{p}_{d,\tau}^*;\bm{w}_{d,\tau}^*]_{\tau=1}^T$. $\bm{\rho}\in \mathbb{R}^N$ is the marginal cost vector of  
 traditional generators. RES enters the market with zero marginal cost. Each element in the vector $\bm{p}_{d,\tau}$ represents the power generated by a traditional generator unit, whose marginal cost is in the corresponding element in $\bm{\rho}$.
 Here, we assume there is no power loss on lines, and include a DC representation of the network. The equality constraint \eqref{DAb} enforces the power balance conditions. For simplification and considering the high accuracy of demand forecast, the demand $\bm{l}_{d,\tau}\in \mathbb{R}^N$ is considered to be known with certainty.
 The inequality constraints \eqref{DAc} restrict the scheduled power flow within the line flow limits. $\bm{H}$ in \eqref{DAc} is the Power Transfer Distribution Factors \cite{zimmerman2010matpower} mapping the nodal power injection to the power flow on lines. \eqref{DAd} and \eqref{DAe} are the output power and ramping limits of the traditional generators. \eqref{DAf} limits the DA schedule of RES up to the forecast $\hat{\bm{y}}_{d,\tau}$ which represents a single-value estimate of the RES production $Y_{d,\tau}$. $Y_{d,\tau}$ is a random variable since the RES production is unknown in DA market clearing.

\begin{remark}
 In line with European practices \cite{morales2023prescribing,hermann2022complementarity}, we do not incorporate binary decisions regarding unit commitment (UC) in \eqref{DA}. However, we note that UC is a requisite consideration in the U.S. markets. To analyze the market behavior, the relaxed UC problem, where the binary commitment decisions are substituted with continuous ones, is widely used \cite{kazempour2017value,zhao2022uncertainty}. The exact relaxation can be achieved by accurately adding cutting planes \cite{ferber2020mipaal}. In this way, the mixed integer program is equivalently transformed into a linear program. A common practice is to train the forecasting model using the relaxed UC formulation \cite{zhao2022uncertainty}. During the operational phase, the generated forecasts are then used as inputs to the UC problem.  
\end{remark}

Since the RES production is uncertain in DA, the DA schedules are to be adjusted at each time-slot $\tau,\forall \tau=1,...,T$ in RT on day $d$, after the RES realization $\bm{y}_{d,\tau}$ is observed. The RT market deals with the imbalance $\bm{y}_{d,\tau}-\bm{w}_{d,\tau}^*$ caused by RES, with a minimized imbalance cost. Additionally, the RT market clearing at time-slot $\tau,\forall \tau=2,...,T$ is influenced not only by the DA clearing outcomes but also by the extent of power adjustments made in the preceding time-slot. This is due to the ramping constraints that interconnect adjacent time slots. Because the market clearing is conducted separately for each day, the RT market clearing at time-slot $\tau=1$ on day $d$ remains unaffected by any adjustments made at time-slot $\tau=T$ on the previous day $d-1$. In the following, we firstly give the mathematical formulation of the RT clearing at $\tau=1$,
\begin{subequations}\label{RT}
\begin{alignat}{2} &\mathop{\min}_{\bm{z}_{d,\tau}}&&\ \bm{\rho}_{+}^\top\bm{p}_{d,\tau}^+-\bm{\rho}_{-}^\top\bm{p}^-_{d,\tau}\label{RTa}\\ 
    & \text{s.t.} && \bm{1}^\top(\bm{p}_{d,\tau}^+-\bm{p}^-_{d,\tau}-\bm{\kappa}_{d,\tau})=-\bm{1}^\top(\bm{y}_{d,\tau}-\bm{w}^*_{d,\tau})\label{RTb}
    \\ 
    &&& 
    -\overline{\bm{f}}-\bm{H}(\bm{p}_{d,\tau}^*+\bm{w}^*_{d,\tau}-\bm{l}_{d,\tau})\leq \bm{H}(\bm{p}_{d,\tau}^+-\bm{p}_{d,\tau}^--\bm{\kappa}_{d,\tau}\nonumber\\
    &&& \qquad +\bm{y}_{d,\tau}-\bm{w}^*_{d,\tau})\leq\overline{\bm{f}}-\bm{H}(\bm{p}_{d,\tau}^*+\bm{w}^*_{d,\tau}-\bm{l}_{d,\tau})\label{RTc}\\
    &&&  \bm{0} \leq \bm{p}_{d,\tau}^+ \leq \overline{\bm{p}^+} \label{RTd}\\
    &&&  \bm{0} \leq \bm{p}_{d,\tau}^- \leq \overline{\bm{p}^-} \label{RTe}\\
    &&&  0 \leq \bm{p}_{d,\tau}^*+\bm{p}_{d,\tau}^+-\bm{p}_{d,\tau}^- \leq \overline{\bm{p}} \label{RTf}\\
    &&& \bm{0} \leq \bm{\kappa}_{d,\tau} \leq \bm{y}_{d,\tau}\label{RTg}
\end{alignat}
\end{subequations}
where $\bm{z}_{d,\tau}=[\bm{p}_{d,\tau}^+\in \mathbb{R}^N;\bm{p}_{d,\tau}^-\in \mathbb{R}^N;\bm{\kappa}_{d,\tau}\in \mathbb{R}^N]$ is the collection of RT decision variables. The output power of generators may be increased by an amount $\bm{p}_{d,\tau}^+$  with the marginal cost $\bm{\rho}_+>0$ for up-regulation, or decreased by an amount $\bm{p}_{d,\tau}^-$ with the marginal utility $\bm{\rho}_->0$ for down-regulation. $\bm{\kappa}_{d,\tau}$ is the amount of RES spill. These decisions are driven by the need to settle the RES deviation $\bm{y}_{d,\tau}-\bm{w}_{d,\tau}^*$ in \eqref{RTb}. \eqref{RTc} is the power flow constraint, whose lower and upper bounds are determined by subtracting the power flow in the DA market from the line capacity. \eqref{RTd} and \eqref{RTe} limit the amount of up-regulation and down-regulation power to $\overline{\bm{p}^+},\overline{\bm{p}^-}$. For inflexible generators that cannot be dispatched in RT, the corresponding elements in the upper bounds will be zero, resulting in zero up- and down-adjustments for those generators. Additionally, the eventual generation power, considering the DA schedule $\bm{p}_{d,\tau}^*$ and the adjustment, should be within the output power limits, as stated in \eqref{RTf}. The inclusion of RES spill accounts for situations where the actual RES generation surpasses the scheduled amount in the DA schedule $\bm{w}_{d,\tau}^*$, and the excess cannot be entirely offset by the down-regulation power available from flexible generators. The amount of RES spill $\bm{\kappa}_{d,\tau}$ can be at most to its realization $\bm{y}_{d,\tau}$, as stated in \eqref{RTg}.

The RT clearing at time-slot $\tau,\forall \tau=2,...,T$ is,
\begin{subequations}\label{RT2}
\begin{alignat}{2} &\mathop{\min}_{\bm{z}_{d,\tau}}&&\ \eqref{RTa}\\ 
    & \text{s.t.} && \eqref{RTb},\eqref{RTc},\eqref{RTd},\eqref{RTe},\eqref{RTf},\eqref{RTg}\\
    &&& -\overline{\bm{r}} \leq \bm{p}_{d,\tau}^*+\bm{p}_{d,\tau}^+-\bm{p}_{d,\tau}^- -\nonumber\\ &&& \qquad (\bm{p}_{d,\tau-1}^*+\bm{p}_{d,\tau-1}^{+*}-\bm{p}_{d,\tau-1}^{-*})\leq \overline{\bm{r}}\label{RTh}
\end{alignat}
\end{subequations}

The difference between the eventual generation at time-slot $\tau-1$, denoted by $\bm{p}_{d,\tau-1}^*+\bm{p}_{d,\tau-1}^{+*}-\bm{p}_{d,\tau-1}^{-*}$, and the eventual generation at time-slot $\tau$ must satisfy the ramping constraints, as stated in \eqref{RTh}.

In this work, we follow the practice in \cite{morales2014electricity} and assume that the DA and RT markets share the same temporal granularity, i.e., one hour, leading to $T=24$. However, the market clearing model in \eqref{DA}, \eqref{RT}, and \eqref{RT2} can be easily adapted to DA and RT markets with different temporal granularity.

To obtain the unique primal and dual solutions from DA and RT clearing, we require each element in the marginal cost vectors $\bm{\rho},\bm{\rho}_+,\bm{\rho}_-$ are different. After solving the DA and RT market clearing, the eventual generation of the generators is either $\bm{p}_{d,\tau}^*+\bm{p}_{d,\tau}^{+*}$ when the RES falls short of its scheduled production in RT, or $\bm{p}_{d,\tau}^*-\bm{p}_{d,\tau}^{-*}$ when the RES generates more power than the schedule. Here, we define overall generation cost or the negative social surplus in a day.

\begin{definition}
    We define overall generation cost in a day $d$ as,
\begin{subequations}
    \begin{align}
       &\sum_{\tau=1}^T \bm{\rho}^\top \bm{p}^*_{d,\tau}+\bm{\rho}_+^\top \bm{p}^{+*}_{d,\tau}-\bm{\rho}_-^\top \bm{p}^{-*}_{d,\tau}\label{overallcost}\\
       =&\sum_{\tau=1}^T \bm{\rho}^\top (\bm{p}^*_{d,\tau}+\bm{p}^{+*}_{d,\tau})+(\bm{\rho}_+-\bm{\rho})^\top\bm{p}^{+*}_{d,\tau}+\nonumber\\
       &\sum_{\tau=1}^T\bm{\rho}^\top (\bm{p}^*_{d,\tau}-\bm{p}^{-*}_{d,\tau})+(\bm{\rho}-\bm{\rho}_-)^\top\bm{p}^{-*}_{d,\tau}
    \end{align}   \end{subequations}   
\end{definition}

$\bm{\rho}_+-\bm{\rho}$ and $\bm{\rho}-\bm{\rho}_-$ are the incremental bidding price, which reflects the marginal opportunity loss for up- and down-regulation \cite{zavala2017stochastic}. We require them to be positive for penalizing the RT adjustment due to forecasting errors. In this way, any RT adjustment would bring the extra cost either $(\bm{\rho}_+-\bm{\rho})^\top\bm{p}^{+*}_{d,\tau}$ or $(\bm{\rho}-\bm{\rho}_-)^\top\bm{p}^{-*}_{d,\tau}$. The incremental bidding prices of supplying upward and downward balancing power are usually different. This explains why forecasting the expectation hardly works well in reducing the overall cost, as it overlooks the typical asymmetry affecting the RT cost.

\subsection{Mathematical Reformulation}\label{Mathematical Reformulation}

In this subsection, we first convert the RT clearing in \eqref{RT} and \eqref{RT2} into a mathematically equivalent form, and then give the compact form of DA and RT market clearing.

We reformulate the RT clearing in \eqref{RT}. To show the upper and lower bounds of the power adjustment more clearly, we divide the constraint \eqref{RTf} into two parts. Concretely,  when the RES produces less power than the schedule, we have $\bm{p}_{d,\tau}^+ \geq \bm{0},\bm{p}_{d,\tau}^- = \bm{0}$. Conversely, when the RES produces more power than the schedule, we have $\bm{p}_{d,\tau}^- \geq \bm{0},\bm{p}_{d,\tau}^+ = \bm{0}$. We divide \eqref{RTf} into the following two constraints by the two cases,
\begin{subequations}\label{reformuRT1}
    \begin{align}
    &-\bm{p}_{d,\tau}^* \leq \bm{p}_{d,\tau}^+ \leq \overline{\bm{p}}-\bm{p}_{d,\tau}^*\\ 
    & \bm{p}_{d,\tau}^*-\overline{\bm{p}} \leq \bm{p}_{d,\tau}^- \leq \bm{p}_{d,\tau}^*
\end{align}
\end{subequations}

Since $0 \leq \bm{p}_{d,\tau}^* \leq \overline{\bm{p}}$, the left side of \eqref{reformuRT1} is less than 0. Considering the power adjustment  $\bm{p}_{d,\tau}^+,\bm{p}_{d,\tau}^-$ is larger than 0 as stated in \eqref{RTd} and \eqref{RTe}, \eqref{reformuRT1} can be further simplified as,
\begin{subequations}\label{reformuRT1simp}
    \begin{align}
    &\bm{0} \leq \bm{p}_{d,\tau}^+ \leq \overline{\bm{p}}-\bm{p}_{d,\tau}^*\\ 
    & \bm{0} \leq \bm{p}_{d,\tau}^- \leq \bm{p}_{d,\tau}^*
\end{align}
\end{subequations}
The RT market clearing at time $\tau=1$ becomes,
\begin{subequations}\label{RTreform}
\begin{alignat}{2}
&\mathop{\min}_{\bm{z}_{d,\tau}}&&\ \eqref{RTa}\\
& \text{s.t.} && \eqref{RTb},\eqref{RTc},\eqref{RTd},\eqref{RTe},\eqref{reformuRT1simp},\eqref{RTg}
\end{alignat}
\end{subequations}

Likewise, \eqref{RTh} can be equivalently reformulated as,
\begin{subequations}\label{reformuRT2simp}
    \begin{align}
    &\bm{0} \leq \bm{p}_{d,\tau}^+ \leq \overline{\bm{r}}+(\bm{p}_{d,\tau-1}^*+\bm{p}_{d,\tau-1}^{+*}-\bm{p}_{d,\tau-1}^{-*})-\bm{p}_{d,\tau}^*\\ 
    & \bm{0} \leq \bm{p}_{d,\tau}^- \leq \bm{p}_{d,\tau}^*-(\bm{p}_{d,\tau-1}^*+\bm{p}_{d,\tau-1}^{+*}-\bm{p}_{d,\tau-1}^{-*})+\overline{\bm{r}}
\end{align}
\end{subequations}

The RT market clearing at time $\tau=2,...,T$ becomes,
\begin{subequations}\label{RTreform2}
\begin{alignat}{2}
&\mathop{\min}_{\bm{z}_{d,\tau}}&&\ \eqref{RTa}\\
& \text{s.t.} && \eqref{RTb},\eqref{RTc},\eqref{RTd},\eqref{RTe},\eqref{reformuRT1simp},\eqref{reformuRT2simp},\eqref{RTg}
\end{alignat}
\end{subequations}

Next, we convert the DA market clearing in \eqref{DA} and the RT market clearing in \eqref{RTreform} and \eqref{RTreform2}, which are linear programs, into equivalent compact forms, with the dual variable listed after the colon. Concretely, the compact DA market clearing is,
\begin{subequations}\label{DAcompact}
\begin{alignat}{2}
 \bm{x}_d^*=&\mathop{\arg\min}_{\bm{x}_d} && \quad \bm{\rho}_{\text{DA}}^{\top}  \bm{x}_{d}\\
    &\text{s.t.} &&  \quad  \bm{G}_{\text{DA}}\bm{x}_d\leq \bm{\psi}_{\text{DA},d}+\bm{F}_{\text{DA}}^y\hat{\bm{y}}_d:\bm{\sigma}_{d}\label{DAcompat_b}.
\end{alignat}
\end{subequations}
The coefficients $\bm{\rho}_{\text{DA}},\bm{G}_{\text{DA}},\bm{\psi}_{\text{DA},d},\bm{F}_{\text{DA}}^y$ are constant. The RES forecasts and the demand are summarized into vectors $\hat{\bm{y}}_d=[\hat{\bm{y}}_{d,\tau}]_{\tau=1}^T$ and $\bm{l}_d=[\bm{l}_{d,\tau}]_{\tau=1}^T$, respectively.
The value of $\bm{\psi}_{\text{DA},d}$ varies from day to day due to its dependence on the demand $\bm{l}_d$. Likewise, the RT market clearing in \eqref{RTreform} and \eqref{RTreform2} are converted into the compact forms,
\begin{subequations}\label{RTcompact}
\begin{alignat}{2}
 \bm{z}_{d,\tau}^*=&\mathop{\arg\min}_{\bm{z}_{d,\tau}} && \quad \bm{\rho}_{\text{RT}}^{\top}  \bm{z}_{d,\tau}\\
    &\text{s.t.} &&  \bm{G}_{\text{RT}}\bm{z}_{d,\tau}\leq \bm{\psi}_{\text{RT},d,\tau}+\bm{F}_{\text{RT}}\bm{x}_{d,\tau}^*:\bm{\nu}_{d,\tau}, \tau=1\label{RTcompat_b}.
\end{alignat}
\end{subequations}
\begin{subequations}\label{RTcompact2}
\begin{alignat}{2}
 \bm{z}_{d,\tau}^*=&\mathop{\arg\min}_{\bm{z}_{d,\tau}} && \quad \bm{\rho}_{\text{RT}}^{\top}  \bm{z}_{d,\tau}\\
    &\text{s.t.} &&  \quad  \bm{G}_{\text{RT}}^\prime\bm{z}_{d,\tau}\leq \bm{\psi}_{\text{RT},d,\tau}^\prime+[\bm{F}_{\text{RT}}^{\prime x},\bm{F}_{\text{RT}}^{\prime p},\bm{F}_{\text{RT}}^{\prime +-}]\nonumber\\
&&& [\bm{x}^*_{d,\tau},\bm{p}^*_{d,\tau-1},\bm{p}^{+-*}_{d,\tau-1}]^\top: \bm{\zeta}_{d,\tau},\forall \tau=2,...,T\label{RTcompat2_b}.
\end{alignat}
\end{subequations}
 The coefficients $\bm{\rho}_{\text{RT}},\bm{G}_{\text{RT}},\bm{\psi}_{\text{RT},d,\tau},\bm{F}_{\text{RT}}$ and $\bm{G}^\prime_{\text{RT}},\bm{\psi}^\prime_{\text{RT},d,\tau},\bm{F}^{\prime x}_{\text{RT}},\bm{F}^{\prime p}_{\text{RT}},\bm{F}^{\prime +-}_{\text{RT}}$ are constant. The values of $\bm{\psi}_{\text{RT},d,\tau},\bm{\psi}_{\text{RT},d,\tau}^\prime$ vary from hour to hour due to the dependence on the RES realization $\bm{y}_{d,\tau}$. The parameter $\bm{p}^{+-*}_{d,\tau-1}=[\bm{p}^{+*}_{d,\tau-1};\bm{p}^{-*}_{d,\tau-1}]$ contains RT power adjustment for up- and down-regulation at previous time $\tau-1$.

The market clearing model developed in this work accounts for the participation of flexible conventional generators and RES on the supply side. This modeling approach is consistent with those used in \cite{morales2014electricity, zhao2022uncertainty}. Although the discussions in the following sections are based on the DA and RT models given in \eqref{DA}, \eqref{RT}, and \eqref{RT2}, we emphasize that even when additional participants such as energy storage are included, provided the model remains linear and the structure of the DA and RT market clearing problems still aligns with the formulations in \eqref{DAcompact}, \eqref{RTcompact}, and \eqref{RTcompact2}, the proposed method for deriving the loss function remains applicable. In the Appendix \ref{compat}, we use energy storage as an example to show that the structure of the market clearing model with energy storage aligns with \eqref{DAcompact}, \eqref{RTcompact} \eqref{RTcompact2}.

\section{Methodology}\label{Parameter Estimation}

\begin{figure}[t]
  \centering
  \includegraphics[scale=0.45]{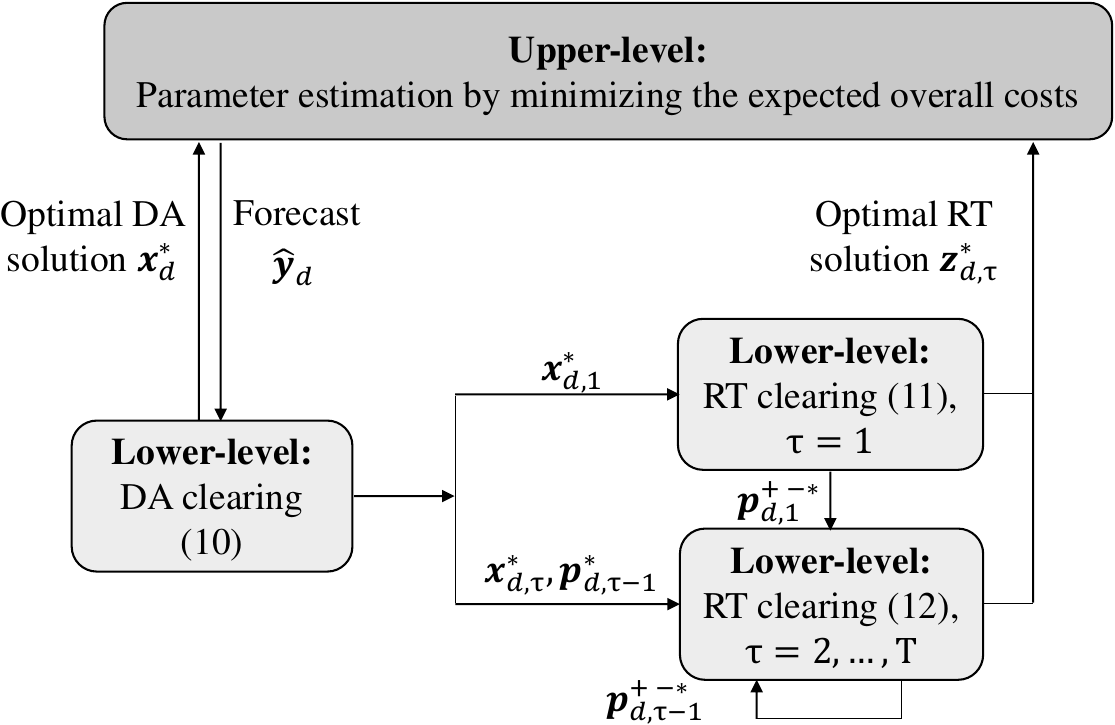}
  \caption{An illustration of the parameter estimation problem.}
\label{bilevelV2}
\end{figure}

In this section, we show how to identify the RIEQ, whose value corresponds to $\hat{\bm{y}}_{d,\tau}$, an input parameter in \eqref{DA}, via training forecasting models. Let $g(\ \cdot\ ;\Theta)$ denote the forecasting model with the parameter $\Theta$, and $\bm{s}_{d,\tau}$ denote the context. The RIEQ forecast for the time $\tau$ on day $d$ is,
\begin{equation}\label{fm}
    \hat{\bm{y}}_{d,\tau}= g(\bm{s}_{d,\tau};\Theta)
\end{equation}
where $\hat{\bm{y}}_{d,\tau}$ denotes the forecast. 

Typically, there are two phases, i.e., the training phase and the operational phase. During the training phase, the parameter $\Theta$ is estimated using historical data by solving an optimization problem in which the loss function serves as the objective. When a neural network is used as the forecasting model, this estimation involves forward computation followed by backward gradient descent. At the operational phase, forecasts are generated using the estimated parameters through a computationally efficient forward pass..

In the following, we first formulate the training phase as a bilevel optimization problem \cite{zhang2024valueoriented} to estimate the parameters of the forecasting model, followed by the operational phase. We then discuss key market properties under the issued RIEQ.

\subsection{Training Phase}
Data in the training set $\{\{\bm{s}_{d,\tau},y_{d,\tau}\}_{\tau=1}^T\}_{d=1}^D$ is available, which consists of historical context and RES realization in $D$ days. An illustration of the bilevel program is shown in Fig. \ref{bilevelV2}. The objective of determining RIEQ is to minimize the overall DA and RT operating costs while accounting for the impact of DA market decisions made under uncertainty on RT market clearing, which corrects power imbalances caused by RES. As a result, even with a deterministic DA clearing, the impact of RES uncertainty on both DA and RT markets is accounted for through RIEQ, ensuring coordination between the two markets. The upper level determines the model parameter $\Theta$, while the lower level involves the DA and RT market clearings. The bilevel program is mathematically formulated as,
\begin{subequations}\label{bilevel}
\begin{alignat}{2} 
&\underset{{\Theta}}{\mathop{\min}}&&  \frac{1}{D\cdot T}\sum_{d=1}^{D}\{\bm{\rho}_{\text{DA}}^\top\bm{x}_d^*+\sum_{\tau=1}^{T}\bm{\rho}_{\text{RT}}^\top\bm{z}_{d,\tau}^*\}\label{bilevel_a}\\
    & \text{s.t.} 
    && \hat{\bm{y}}_{d,\tau}=g(\bm{s}_{d,\tau};\Theta),\forall \tau=1,...,T,\forall d=1,...,D\\ 
    &&&0 \leq \hat{\bm{y}}_{d,\tau} \leq \bar{\bm{y}}_{d,\tau},\forall \tau=1,...,T,\forall d=1,...,D\label{bilevel_c}\\
    &&&\begin{rcases}\eqref{DAcompact},\forall d=1,...,D\label{bilevel_d}\\
    \eqref{RTcompact}, \forall d=1,...,D\\ \eqref{RTcompact2},\forall d=1,...,D\end{rcases}\text{Lower level}
\end{alignat}
\end{subequations}
where the upper-level objective \eqref{bilevel_a} seeks to minimize the expected overall operating cost of the two markets. This is achieved by leveraging the optimal DA and RT cost functions, which are informed by the decisions obtained from the lower level \eqref{bilevel_d}. \eqref{bilevel_c} limits the forecast $\hat{\bm{y}}_{d,\tau}$ within $\bar{\bm{y}}_{d,\tau}$, which can be RES capacity. The lower level treats the forecast $\hat{\bm{y}}_{d,\tau}$ as an input parameter. As a consequence, both DA and RT decisions are affected by it.

To show the impact of the forecast on the operating cost more clearly, we replace the lower level with the dual problems. The overall operating cost within the upper-level objective is then substituted with the DA and RT dual objectives. These objectives are constructed as a linear combination of the right-side parameters and the associated dual variables, i.e,
\begin{subequations}\label{bileveldual}
\begin{alignat}{2} 
&\underset{{\Theta}}{\mathop{\min}}&&  \frac{1}{D\cdot T}\sum_{d=1}^{D}\{-\bm{\sigma}_{d}^{*\top}(\bm{\psi}_{\text{DA},d}+\bm{F}_{\text{DA}}^y\hat{\bm{y}}_d)\nonumber\\
&&&-\bm{\nu}_{d,1}^{*\top}(\bm{\psi}_{\text{RT},d,1}+\bm{F}_{\text{RT}}\bm{x}_{d,1}^*)+\nonumber\\
&&&\sum_{\tau=2}^{T}-\bm{\zeta}_{d,\tau}^{*\top}(\bm{\psi}_{\text{RT},d,\tau}^\prime+\bm{F}_{\text{RT}}^{\prime x}\bm{x}_{d,\tau}^*+\bm{F}_{\text{RT}}^{\prime p}\bm{p}_{d,\tau-1}^*+\nonumber\\
&&&\bm{F}_{\text{RT}}^{\prime +-}\bm{p}_{d,\tau-1}^{+-*})\}\label{bileveldual_a}\\
    & \text{s.t.} 
    && \hat{\bm{y}}_{d,\tau}=g(\bm{s}_{d,\tau};\Theta),\forall \tau=1,...,T,\forall d=1,...,D\\ 
    &&&0 \leq \hat{\bm{y}}_{d,\tau} \leq \bar{\bm{y}}_{d,\tau},\forall \tau=1,...,T,\forall d=1,...,D\label{bileveldual_c}\\
    &&&\bm{\sigma}_{d}^*=\mathop{\arg\max}_{\bm{\sigma}_{d}\geq0}-\bm{\sigma}_{d}^{\top}(\bm{\psi}_{\text{DA},d}+\bm{F}_{\text{DA}}^y\hat{\bm{y}}_d),\forall d=1,...,D\label{bileveldual_d}\\
    &&&\eqref{DAcompact},\forall d=1,...,D\label{bileveldual_e}\\
    &&&\bm{\nu}_{d,1}^*=\mathop{\arg\max}_{\nu_{d,1} \geq 0}-\bm{\nu}_{d,1}^{\top}(\bm{\psi}_{\text{RT},d,1}+\bm{F}_{\text{RT}}\bm{x}_{d,1}^*),\forall d=1,...,D\label{bileveldual_f}\\
    &&&\eqref{RTcompact},\forall d=1,...,D\label{bileveldual_g}\\
    &&& \bm{\zeta}_{d,\tau}^*=\mathop{\arg\max}_{\bm{\zeta}_{d,\tau} \geq 0}-\bm{\zeta}_{d,\tau}^{\top}(\bm{\psi}_{\text{RT},d,\tau}^\prime+\bm{F}_{\text{RT}}^{\prime x}\bm{x}_{d,\tau}^*+\bm{F}_{\text{RT}}^{\prime p}\bm{p}_{d,\tau-1}^*+\nonumber\\
&&&\bm{F}_{\text{RT}}^{\prime +-}\bm{p}_{d,\tau-1}^{+-*}),\forall \tau=2,...,T,\forall d=1,...,D\label{bileveldual_h}\\
&&&\eqref{RTcompact2},\forall d=1,...,D\label{bileveldual_i}
\end{alignat}
\end{subequations}
where \eqref{bileveldual_d} is the dual problem of DA clearing. \eqref{bileveldual_f} is the dual problem of RT clearing at $\tau=1$, and \eqref{bileveldual_h} is the dual problem of RT clearing at $\tau=2,...,T$.
Since RT clearing requires the primal solutions of DA clearing and the previous RT clearing as input parameters, we also include the primal problems in the lower level. The forecast $\hat{\bm{y}}_d$ affects the upper-level objective \eqref{bileveldual_a} via its impact on the DA and RT dual solutions  $\bm{\sigma}_d^*,\bm{\nu}_{d,1}^*,\bm{\zeta}_{d,\tau}^*$, and their primal solutions $\bm{x}^*_{d,\tau},\bm{p}^*_{d,\tau-1},\bm{p}^{+-*}_{d,\tau-1}$. If we can obtain the function between them and the forecast $\hat{\bm{y}}_d$ directly, the upper-level objective can be rewritten as a function regarding the forecast $\hat{\bm{y}}_d$, and can be used as the loss function for training. The specific design of the loss function is deferred to Section \ref{loss function design}.

We note that the training is conducted offline. That is, it is performed using historical data from DA and RT markets, including demand, RES realizations, and generator bidding information. No actual market clearing occurs during the training phase. 

\subsection{Operational Phase}
At the operational phase, the forecast $\bm{\hat{y}}_{d,\tau}$ under the context $\bm{s}_{d,\tau}$ can be obtained by the trained model. Subsequently, utilizing this forecast, the operator proceeds to solve the DA clearing in \eqref{DA}, and obtains the DA schedules of traditional generator and RES, i.e., $\bm{p}_{d,\tau}^*,\bm{w}_{d,\tau}^*,\forall \tau=1,...,T$. After the RES realization is revealed at each time $\tau=1,\ldots,T$, the RT market described in \eqref{RT} and \eqref{RT2} is cleared to resolve the power imbalance between the realized RES output and its scheduled value, i.e., $\bm{1}^\top(\bm{y}_{d,\tau}-\bm{w}_{d,\tau}^*)$.

\subsection{Market Properties}

In this work, we assume that the DA market is primarily affected by uncertainty, while the RT market is cleared based on RES realization or a highly accurate RT RES forecast close to the realization. To improve the coordination between DA and RT markets, RIEQ appears as the DA RES forecast $\hat{\bm{y}}_{d,\tau}$ in \eqref{DAf}. Compared to the existing DA market clearing model that uses traditional DA RES forecasts, such as the expected RES generation, the market clearing structure remains unchanged and the values of RES forecasts is the only difference. In this regard, the current DA market properties are preserved, i.e.,

\textit{1) Cost recovery}: The profits of market players are nonnegative at the market clearing solutions. As the DA market clearing in \eqref{DA} determines the schedules of traditional generators and RES, cost recovery specifically ensures that their profits remain nonnegative given the market clearing schedules.

\textit{2) Revenue adequacy}: Revenue adequacy implies that, at the market clearing solutions, the total payment made by the load to the market operator equals the total payment made by the market operator to traditional and renewable generators, as well as to the transmission line operator. 

The proof for the above properties is provided in the Appendix \ref{proof_property}. Note that these properties can only be ensured in expectation for stochastic market clearing, which is a key drawback.

\section{Loss Function Design}
\label{loss function design}
We derive the loss function based on the bilevel program in \eqref{bileveldual}. We first analyze how forecasts influence the DA and RT optimal solutions. Based on it, a loss function is derived, and will be used for training a forecasting model.

\subsection{The Impact of the Forecast on the DA and RT Solutions}
This section analyzes the impact of the forecasts on the primal and dual solutions of DA and RT market clearings. Specifically,  we derive analytical functions that quantitatively depict the impact of the forecasts on the optimal primal and dual solutions, whose illustration is shown in Fig. \ref{chain}. As a parameter to the DA market clearing \eqref{DAcompact}, forecasts $\hat{\bm{y}}_d$ influence DA  primal and dual solutions directly. As for the RT market clearing, the impact of the forecast is indirect, as it does not directly appear as the parameter in \eqref{RTcompact} or \eqref{RTcompact2}. Concretely, for the RT market clearing at time-slot $\tau=1$, the forecast $\hat{\bm{y}}_d$ influences  $\bm{x}_{d,1}^*$, and then $\bm{x}_{d,1}^*$ influences the RT solutions, with the first impact being determined by the DA clearing \eqref{DAcompact}, and the second impact by the RT clearing \eqref{RTcompact}. For the RT clearing at time-slot $\tau=2,...,T$, the impact of the forecast is more complex. The forecast $\hat{\bm{y}}_d$ affects the DA solutions $\bm{x}_{d,\tau}^*,\bm{p}_{d,\tau-1}^*$ through the DA clearing \eqref{DAcompact}. The influence of the forecast $\hat{\bm{y}}_d$ on the RT solutions $\bm{p}_{d,\tau-1}^{+-*}$ at the previous time-slot $\tau-1$ occurs as explained earlier. Then, the influence of the parameters $\bm{x}_{d,\tau}^*,\bm{p}^*_{d,\tau-1},\bm{p}^{+-*}_{d,\tau-1}$  on the RT clearing solutions at time $\tau$ is determined by \eqref{RTcompact2}. 

\begin{figure*}[t]
\centering
\includegraphics[scale=0.6]{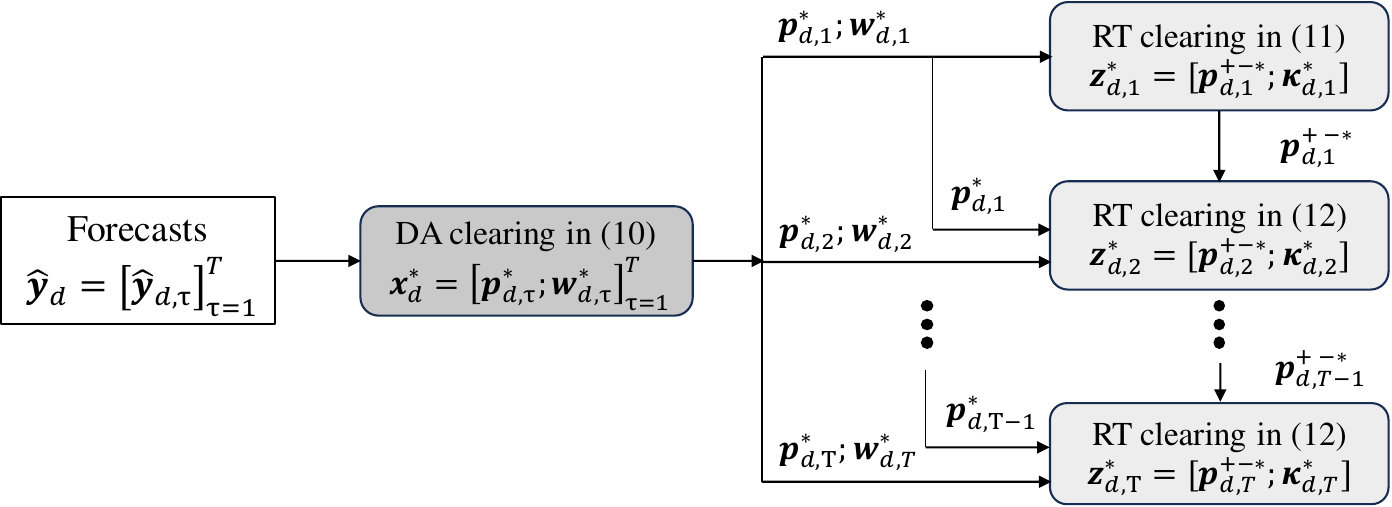}
\caption{An illustration of the impact of forecasts on DA and RT primal solutions.}
  
\label{chain}
\end{figure*}

Since the core of the above analysis is understanding the impact of the parameters on the optimization solutions, we use the multiparametric theory for this end. A general linear program \eqref{lp} is used as an example. We firstly define primal and dual decision policies. Then, the theorem regarding them is presented.
\begin{subequations}\label{lp}
\begin{alignat}{2}
 \bm{x}^*=&\mathop{\arg\min}_{\bm{x}} && \quad \bm{c}^\top  \bm{x}\label{lpa}\\
    &\text{s.t.} &&  \quad  \bm{G}\bm{x}\leq \bm{\psi}+\bm{F}\bm{\omega}:\bm{\sigma}\label{lpb}.
\end{alignat}
\end{subequations}
\begin{definition}
\emph{\textbf{(Primal and dual decision policies)}}
Primal and dual decision policies are functions defined across the polyhedral set $\Omega$, which describe the change in the optimal primal and dual solutions, i.e., $\bm{x}^*$ and $ \bm{\sigma}^*$, as the parameter $\bm{\omega}$ varies in $\Omega$.

\end{definition}

\begin{theorem} \label{theorem1}
\emph{\textbf{(\cite{borrelli2003geometric})}} Consider the linear program \eqref{lp} and the parameter $\bm{\omega} \in \Omega$. The primal and dual decision policies are a piecewise linear function and a stepwise function respectively, if there exists a polyhedral partition $R_1,...,R_N$ of $\Omega$, and $\forall \bm{\omega}\in R_i$, the primal decision policy is linear, and the dual decision policy is a constant function.
\end{theorem}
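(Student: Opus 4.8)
The plan is to prove this via the standard active-set machinery of multiparametric linear programming, since the statement is really the assertion that the critical-region decomposition exists and that the two policies take the claimed local forms on each piece. Fix any $\bm{\omega}_0\in\Omega$ and solve \eqref{lp}; by the distinct-cost (nondegeneracy) assumption stated earlier, the optimal primal and dual solutions $\bm{x}^*(\bm{\omega}_0)$ and $\bm{\sigma}^*(\bm{\omega}_0)$ are unique. Let $\mathcal{A}=\mathcal{A}(\bm{\omega}_0)$ be the index set of rows of \eqref{lpb} that bind at the optimum and $\mathcal{N}$ its complement. The idea is that $\Omega$ decomposes into regions indexed by the optimal active set $\mathcal{A}$, and that ``piecewise linear'' and ``stepwise'' are merely the global names for the per-region forms derived below.

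First I would fix the local form of both policies from the optimality conditions under a frozen active set. On the binding rows the optimum satisfies $\bm{G}[\mathcal{A}]\,\bm{x}^* = \bm{\psi}[\mathcal{A}] + \bm{F}[\mathcal{A}]\,\bm{\omega}$; nondegeneracy makes $\bm{G}[\mathcal{A}]$ square and invertible, so $\bm{x}^*(\bm{\omega}) = \bm{G}[\mathcal{A}]^{-1}\big(\bm{\psi}[\mathcal{A}] + \bm{F}[\mathcal{A}]\,\bm{\omega}\big)$, an affine function of $\bm{\omega}$. For the dual, complementary slackness forces $\Pi_{\mathcal{N}}\bm{\sigma}^*=\bm{0}$, while stationarity gives $\bm{G}[\mathcal{A}]^\top\bm{\sigma}[\mathcal{A}]^* = -\bm{c}$. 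The crucial observation is that the right-hand side $-\bm{c}$ carries no dependence on $\bm{\omega}$, so $\bm{\sigma}[\mathcal{A}]^*$ is a fixed vector: the dual policy is \emph{constant} wherever $\mathcal{A}$ stays optimal, which is exactly why it is stepwise rather than affine.

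Next I would verify that the region on which $\mathcal{A}$ remains optimal is polyhedral and that finitely many such regions partition $\Omega$. This region is cut out by (i) primal feasibility of the currently inactive constraints, $\bm{G}[\mathcal{N}]\,\bm{x}^*(\bm{\omega}) \leq \bm{\psi}[\mathcal{N}] + \bm{F}[\mathcal{N}]\,\bm{\omega}$, which after substituting the affine $\bm{x}^*(\bm{\omega})$ becomes a system of affine inequalities in $\bm{\omega}$, and (ii) dual feasibility $\bm{\sigma}[\mathcal{A}]^*\geq\bm{0}$. Each condition defines a polyhedron, so the region $R_i$ is polyhedral. Since there are only finitely many candidate subsets $\mathcal{A}$ of constraint indices, there are finitely many regions; every $\bm{\omega}\in\Omega$ lies in the closure of one of them, yielding the polyhedral partition $R_1,\dots,R_N$. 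Assembling the per-region representations gives a globally piecewise-linear primal policy and a stepwise dual policy, as claimed.

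The step I expect to be the main obstacle is controlling \emph{degeneracy}. If a vertex is primal-degenerate then $\bm{G}[\mathcal{A}]$ need not be square or invertible and several active sets can describe the same optimum, so the regions may overlap or collapse to lower dimension and the affine representation of $\bm{x}^*$ loses uniqueness; symmetrically, dual degeneracy breaks uniqueness of $\bm{\sigma}^*$ and hence the constancy argument. The distinct-marginal-cost assumption imposed earlier is precisely what excludes this, guaranteeing unique primal and dual solutions and therefore a clean, essentially non-overlapping partition. I would lean on that assumption to keep the active-set argument well-posed, noting that without it one must restore uniqueness through a lexicographic or perturbation-based tie-breaking rule before the same conclusion follows.
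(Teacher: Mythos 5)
The paper does not actually prove this theorem: it imports it verbatim from \cite{borrelli2003geometric}, and the only in-paper argument in this neighborhood is the short proof of Proposition \ref{prop1}, which extracts the local affine primal policy from the active constraints. Your proposal is correct in substance and is essentially the canonical critical-region proof underlying the cited result, and its first step (freeze the active set $\mathcal{A}$, solve $\bm{G}[\mathcal{A}]\bm{x}^*=\bm{\psi}[\mathcal{A}]+\bm{F}[\mathcal{A}]\bm{\omega}$, observe that $\bm{G}[\mathcal{A}]^\top\bm{\sigma}[\mathcal{A}]^*=-\bm{c}$ has an $\bm{\omega}$-free right-hand side so the dual is locally constant) matches exactly the machinery the paper itself deploys for Proposition \ref{prop1}; your region construction via primal feasibility of the inactive rows and finiteness of candidate active sets is the standard completion. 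One correction: your claim that the distinct-marginal-cost assumption makes $\bm{G}[\mathcal{A}]$ \emph{square and invertible} overreaches. That assumption buys uniqueness of the primal and dual solutions, but a unique optimum can still sit at an over-determined vertex with more than $n$ active rows (including weakly active ones with zero multiplier), so $\bm{G}[\mathcal{A}]$ may be rectangular; this is precisely why the paper's Proposition \ref{prop1} uses the pseudo-inverse $\bm{G}[\mathcal{J}^a]^{-1}$ rather than a true inverse, and your argument goes through unchanged with that substitution since uniqueness guarantees the consistent over-determined system pins down $\bm{x}^*(\bm{\omega})$ affinely. Your closing paragraph correctly identifies degeneracy as the obstacle, but the remedy is the pseudo-inverse (or selecting a rank-$n$ linearly independent subset of active rows), not the squareness you assumed earlier; with that repair, and taking the partition up to shared lower-dimensional boundaries as you note, the proof is sound.
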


Theorem \ref{theorem1} implies that in a neighborhood of the parameter, the primal decision policy is represented by a linear function, whereas the corresponding dual decision policy remains a constant function. Given a specific value of $\bm{w}$, we study the local policies defined in its neighborhood. The local dual decision policy can be obtained easily, as it is a constant function. Its output is the optimal dual solution obtained by solving the dual problem of \eqref{lp}, given the value of $\bm{\omega}$. Additionally, after solving \eqref{lp}, the active constraints of \eqref{lpb} can be obtained. Let $\mathcal{J}$ denote the row index set associated with \eqref{lpb}, and $\mathcal{J}^a$ denote the row index set of the active constraints. The parameters associated with the active constraints are denoted as $\bm{G}[\mathcal{J}^a],\bm{\psi}[\mathcal{J}^a],\bm{F}[\mathcal{J}^a]$. They are the sub-matrices and sub-vectors of $\bm{G},\bm{\psi},\bm{F}$ and are comprised by the rows of $\bm{G},\bm{\psi},\bm{F}$ in the row index sets $\mathcal{J}^a$.  We have the following proposition for the local primal decision policy,
\begin{proposition}\label{prop1}
    The local primal decision policy of \eqref{lp} is,
    \begin{equation}\label{policyexample}
        \bm{x}^*=\bm{G}[\mathcal{J}^a]^{-1}(\bm{\psi}[\mathcal{J}^a]+\bm{F}[\mathcal{J}^a]\bm{\omega})
    \end{equation}
\end{proposition}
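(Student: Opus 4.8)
The plan is to combine the structural result of Theorem~\ref{theorem1} with the standing assumption that distinct entries in the marginal-cost vectors make the optimal primal and dual solutions unique. Fix the parameter value $\bm{\omega}$ and let $R_i$ be the region from the polyhedral partition of Theorem~\ref{theorem1} that contains it. Throughout $R_i$ the active set $\mathcal{J}^a$ of \eqref{lpb} is fixed, so I would start from the complementary-slackness characterization of optimality: the constraints indexed by $\mathcal{J}^a$ hold with equality while the remaining ones are strict. Writing the tight constraints as a linear system gives
\begin{equation}\label{prop1_active}
\bm{G}[\mathcal{J}^a]\,\bm{x}^* = \bm{\psi}[\mathcal{J}^a] + \bm{F}[\mathcal{J}^a]\,\bm{\omega},
\end{equation}
valid for every $\bm{\omega}\in R_i$.

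The second step is to argue that $\bm{G}[\mathcal{J}^a]$ is square and nonsingular, so that \eqref{prop1_active} can be inverted. Let $n$ denote the dimension of $\bm{x}$. The key is that, under the distinct-marginal-cost assumption, the LP has a unique optimum sitting at a non-degenerate vertex of the feasible polyhedron, so that exactly $n$ constraints are active and their gradients, i.e.\ the rows of $\bm{G}[\mathcal{J}^a]$, are linearly independent. I would make this explicit through the uniqueness of both the primal and dual solutions: a nonzero null direction $\bm{v}$ of $\bm{G}[\mathcal{J}^a]$ would let me move from $\bm{x}^*$ along $\pm\bm{v}$ keeping the active constraints tight and, by strictness, the inactive ones feasible for a small step, and optimality would then force $\bm{c}^\top\bm{v}=0$, producing a second optimum and contradicting primal uniqueness; uniqueness of the dual solution, i.e.\ primal non-degeneracy, rules out surplus active constraints beyond $n$. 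Hence $\bm{G}[\mathcal{J}^a]$ is an $n\times n$ invertible matrix.

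With invertibility established, the final step is immediate: solving \eqref{prop1_active} yields
\begin{equation}
\bm{x}^* = \bm{G}[\mathcal{J}^a]^{-1}\bigl(\bm{\psi}[\mathcal{J}^a] + \bm{F}[\mathcal{J}^a]\,\bm{\omega}\bigr),
\end{equation}
which is exactly \eqref{policyexample}. Since $\mathcal{J}^a$ is constant on $R_i$, this is an affine function of $\bm{\omega}$, consistent with the piecewise-linear primal policy asserted by Theorem~\ref{theorem1}.

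I expect the main obstacle to be the invertibility argument of the middle step, specifically ruling out degeneracy. Obtaining full column rank of the active-constraint rows from primal uniqueness is straightforward, but showing that there are exactly $n$ of them, so that $\bm{G}[\mathcal{J}^a]$ is genuinely square rather than tall, requires the non-degeneracy secured through dual uniqueness. Making the link between ``unique primal and dual solutions'' and ``exactly $n$ linearly independent active constraints'' precise is where the careful work lies.
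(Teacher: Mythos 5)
Your first and last steps coincide with the paper's entire proof: the paper simply writes the active constraints of \eqref{lpb} as the equality system $\bm{G}[\mathcal{J}^a]\bm{x}^*=\bm{\psi}[\mathcal{J}^a]+\bm{F}[\mathcal{J}^a]\bm{\omega}$ and then ``moves $\bm{G}[\mathcal{J}^a]$ to the right''---crucially, with $\bm{G}[\mathcal{J}^a]^{-1}$ explicitly declared to be the \emph{pseudo-inverse}, so the paper never needs $\bm{G}[\mathcal{J}^a]$ to be square or nonsingular. Your middle step is therefore where you genuinely diverge: you set out to prove honest invertibility via non-degeneracy, which the paper deliberately sidesteps. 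Two remarks on that step. First, your null-direction argument deriving full \emph{column} rank of $\bm{G}[\mathcal{J}^a]$ from primal uniqueness is sound, and it is in fact the justification the paper's one-liner tacitly relies on: when the active system is tall but consistent and has full column rank, the Moore--Penrose solution is unique and equals $\bm{x}^*$, so \eqref{policyexample} holds with the pseudo-inverse even under primal degeneracy; without full column rank the pseudo-inverse would return the minimum-norm solution, which need not be $\bm{x}^*$. Second, your identification ``uniqueness of the dual solution, i.e., primal non-degeneracy'' is not correct as an equivalence: primal non-degeneracy implies dual uniqueness, but the converse fails in general, so dual uniqueness alone does not rule out more than $n$ active constraints. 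The claim can be repaired using \emph{both} uniqueness assumptions the paper makes (distinct marginal costs give unique primal and dual solutions): by Goldman--Tucker strict complementarity, the unique pair $(\bm{x}^*,\bm{\sigma}^*)$ must be strictly complementary, so $\bm{\sigma}^*_i>0$ for every $i\in\mathcal{J}^a$; if $|\mathcal{J}^a|>n$ with rank $n$, the dual feasibility system admits a nontrivial null direction along which a strictly positive $\bm{\sigma}^*$ can be perturbed while remaining optimal, contradicting dual uniqueness---hence exactly $n$ active constraints. In short, your route buys a genuinely invertible square system at the cost of this delicate degeneracy analysis (which you correctly flagged), whereas the paper buys brevity by reading $^{-1}$ as a pseudo-inverse; your rank argument is the piece that makes the paper's shortcut rigorous.
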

\begin{proof}
    With the active constraints of \eqref{lp}, we have $\bm{G}[\mathcal{J}^a]\bm{x}^*=\bm{\psi}[\mathcal{J}^a]+\bm{F}[\mathcal{J}^a]\bm{\omega}$.  $\bm{G}[\mathcal{J}^a]^{-1}$ is the pseudo inverse of $\bm{G}[\mathcal{J}^a]$. By moving $\bm{G}[\mathcal{J}^a]$ from left to right, we have \eqref{policyexample}.
\end{proof}

\begin{remark}
\eqref{policyexample} calculates the inverse of  $\bm{G}[\mathcal{J}^a]$, whose computational complexity depends on the matrix size, i.e., $|\mathcal{J}^a|$. We note that a similar matrix inverse is also involved in \cite{donti2017task}, with the matrix size of $|\mathcal{J}|+\iota$, where $\iota$ is the dimension of $\bm{x}$. There is $|\mathcal{J}|+\iota > |\mathcal{J}^a|$. Therefore, the computation burden of \cite{donti2017task} is larger.
\end{remark}

With Proposition \ref{prop1}, we present the local primal decision policy for the DA clearing \eqref{DAcompact}, and RT clearing \eqref{RTcompact} and \eqref{RTcompact2}. Let $\mathcal{J}^a_{\text{DA},d},\mathcal{J}^a_{\text{RT},d,\tau},\forall \tau=1,...,T$ denote the row index set of active constraints \eqref{DAcompat_b}, \eqref{RTcompat_b}, and \eqref{RTcompat2_b}. The parameters associated with the active constraints are denoted as $\bm{G}_{\text{DA}}[\mathcal{J}_{\text{DA},d}^a],\bm{\psi}_{\text{DA},d}[\mathcal{J}_{\text{DA},d}^a],\bm{F}_{\text{DA}}^y[\mathcal{J}_{\text{DA},d}^a]$, $\bm{G}_{\text{RT}}[\mathcal{J}_{\text{RT},d,\tau}^a],\bm{\psi}_{\text{RT},d,\tau}[\mathcal{J}_{\text{RT},d,\tau}^a],\bm{F}_{\text{RT}}[\mathcal{J}_{\text{RT},d,\tau}^a]$,$\bm{G}_{\text{RT}}^\prime[\mathcal{J}_{\text{RT},d,\tau}^{ a}]$, and $\bm{\psi}_{\text{RT},d,\tau}^\prime[\mathcal{J}_{\text{RT},d,\tau}^{ a}],\bm{F}_{\text{RT}}^{\prime x}[\mathcal{J}_{\text{RT},d,\tau}^{ a}],\bm{F}_{\text{RT}}^{\prime p}[\mathcal{J}_{\text{RT},d,\tau}^{ a}],\bm{F}_{\text{RT}}^{\prime +-}[\mathcal{J}_{\text{RT},d,\tau}^{ a}]$. We have the following proposition for the local primal decision policies of \eqref{DAcompact}, \eqref{RTcompact}, and \eqref{RTcompact2}.

\begin{proposition}\label{prop2}
    The local primal decision policies of \eqref{DAcompact},\eqref{RTcompact} and \eqref{RTcompact2} are,
\begin{equation}\label{DApolicy}
        f^x_{d}(\hat{\bm{y}}_d):=\bm{x}_d^*=\bm{G}_{\text{DA}}[\mathcal{J}_{\text{DA},d}^a]^{-1}(\bm{\psi}_{\text{DA},d}[\mathcal{J}_{\text{DA},d}^a]+\bm{F}_{\text{DA}}^y[\mathcal{J}_{\text{DA},d}^a]\hat{\bm{y}}_d)
    \end{equation}
\begin{equation}\label{RTpolicy}
\begin{split}
        &\bm{z}_{d,\tau}^*=\bm{G}_{\text{RT}}[\mathcal{J}_{\text{RT},d,\tau}^a]^{-1}(\bm{\psi}_{\text{RT},d,\tau}[\mathcal{J}_{\text{RT},d,\tau}^a]+\\
        &\bm{F}_{\text{RT}}[\mathcal{J}^a_{\text{RT},d,\tau}]\bm{x}_{d,\tau}^*), \tau=1
    \end{split}
    \end{equation}
    \begin{equation}\label{RTpolicy2}
    \begin{split}
    &\bm{z}_{d,\tau}^*=\bm{G}_{\text{RT}}^{\prime}[\mathcal{J}_{\text{RT},d,\tau}^{ a}]^{-1}(\bm{\psi}^\prime_{\text{RT},d,\tau}[\mathcal{J}_{\text{RT},d,\tau}^{ a}]+\bm{F}_{\text{RT}}^{\prime x}[\mathcal{J}_{\text{RT},d,\tau}^{ a}]\bm{x}_{d,\tau}^*+\\
    &\bm{F}_{\text{RT}}^{\prime p}[\mathcal{J}_{\text{RT},d,\tau}^{ a}]\bm{p}_{d,\tau-1}^*+\bm{F}_{\text{RT}}^{\prime +-}[\mathcal{J}_{\text{RT},d,\tau}^{ a}]\bm{p}_{d,\tau-1}^{+-*}),\forall \tau=2,...,T
    \end{split} 
    \end{equation}
\end{proposition}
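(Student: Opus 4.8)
The plan is to obtain each of the three policies as a direct specialization of Proposition \ref{prop1}, since \eqref{DAcompact}, \eqref{RTcompact}, and \eqref{RTcompact2} are each an instance of the generic parametric linear program \eqref{lp}. The only real work is to identify, for each clearing problem, which quantity plays the role of the parameter $\bm{\omega}$ and which matrices play the roles of $\bm{G},\bm{\psi},\bm{F}$, and then to invoke Proposition \ref{prop1} with the appropriate active-constraint index set. No new machinery is needed; the argument is a careful pattern-match.

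First I would treat the DA clearing. Comparing \eqref{DAcompat_b} with \eqref{lpb}, the identification is immediate: $\bm{x}\mapsto\bm{x}_d$, $\bm{G}\mapsto\bm{G}_{\text{DA}}$, $\bm{\psi}\mapsto\bm{\psi}_{\text{DA},d}$, $\bm{F}\mapsto\bm{F}_{\text{DA}}^y$, and the parameter $\bm{\omega}\mapsto\hat{\bm{y}}_d$. Applying Proposition \ref{prop1} with the active set $\mathcal{J}^a_{\text{DA},d}$ yields \eqref{DApolicy} at once. Next, for the RT clearing at $\tau=1$, the subtlety is that the parameter entering \eqref{RTcompat_b} is the DA primal solution $\bm{x}_{d,1}^*$ rather than an exogenous input; I would set $\bm{\omega}\mapsto\bm{x}_{d,1}^*$ (held fixed at the value produced by the DA stage), identify $\bm{G}\mapsto\bm{G}_{\text{RT}}$, $\bm{\psi}\mapsto\bm{\psi}_{\text{RT},d,1}$, $\bm{F}\mapsto\bm{F}_{\text{RT}}$, and apply Proposition \ref{prop1} with active set $\mathcal{J}^a_{\text{RT},d,1}$ to obtain \eqref{RTpolicy}. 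The case $\tau=2,\ldots,T$ is analogous but with a stacked parameter: reading off \eqref{RTcompat2_b}, I would take $\bm{\omega}\mapsto[\bm{x}_{d,\tau}^*;\bm{p}_{d,\tau-1}^*;\bm{p}_{d,\tau-1}^{+-*}]$ and $\bm{F}\mapsto[\bm{F}_{\text{RT}}^{\prime x},\bm{F}_{\text{RT}}^{\prime p},\bm{F}_{\text{RT}}^{\prime +-}]$, with $\bm{G}\mapsto\bm{G}_{\text{RT}}^\prime$ and $\bm{\psi}\mapsto\bm{\psi}_{\text{RT},d,\tau}^\prime$; Proposition \ref{prop1} with active set $\mathcal{J}^a_{\text{RT},d,\tau}$ then gives \eqref{RTpolicy2}, where the three $\bm{F}$-blocks multiply their respective parameter blocks.

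The main obstacle is not the algebra but verifying that each invocation of Proposition \ref{prop1} is legitimate: one must check that the active-constraint submatrix $\bm{G}[\mathcal{J}^a]$ has full column rank, so that its pseudoinverse recovers the primal optimizer exactly, and that the active set is locally constant so the affine policy is valid on a neighborhood. Here I would lean on the earlier assumption that the entries of $\bm{\rho},\bm{\rho}_+,\bm{\rho}_-$ are pairwise distinct, which guarantees unique primal and dual solutions and hence an unambiguous, full-rank active set at each stage; combined with Theorem \ref{theorem1}, this supplies a polyhedral region around the realized parameter on which the dual solution is constant and the primal solution equals the stated affine function. A secondary point worth flagging is that the three policies are composed rather than independent: the parameter of each RT policy is itself an output of the DA (or previous-RT) policy, so \eqref{RTpolicy}--\eqref{RTpolicy2} should be read as valid at the parameter value actually generated upstream. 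This is precisely the regime needed when the policies are later chained together to express the loss as a function of $\hat{\bm{y}}_d$, so I would state the three policies locally and defer the composition to the subsequent loss-function derivation.
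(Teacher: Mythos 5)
Your proposal matches the paper's own treatment exactly: the paper presents Proposition~\ref{prop2} as a direct instantiation of Proposition~\ref{prop1} applied to the three compact linear programs \eqref{DAcompact}, \eqref{RTcompact}, and \eqref{RTcompact2}, with precisely the parameter identifications ($\bm{\omega}\mapsto\hat{\bm{y}}_d$, $\bm{\omega}\mapsto\bm{x}_{d,1}^*$, and $\bm{\omega}\mapsto[\bm{x}_{d,\tau}^*;\bm{p}_{d,\tau-1}^*;\bm{p}_{d,\tau-1}^{+-*}]$) you give. Your added remarks on the uniqueness assumption from distinct marginal costs, the local validity of the active set via Theorem~\ref{theorem1}, and deferring the composition of policies to the loss-function derivation are all consistent with how the paper uses these results downstream.
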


Eq. \eqref{DApolicy} is a linear function of $\hat{\bm{y}}_d$, whose output is the DA solution $\bm{x}_d^*$ on the day $d$. Based on \eqref{DApolicy}, we will obtain the function between the forecast $\hat{\bm{y}}_d$ and DA solution $\bm{x}_{d,\tau}^*$ at each time $\tau$, denoted as $f^x_{d,\tau}(\hat{\bm{y}}_d)$, as well as the function between the forecast $\hat{\bm{y}}_d$ and DA generator schedule $\bm{p}_{d,\tau-1}^*$, denoted as $f_{d,\tau-1}^p(\hat{\bm{y}}_d), \forall \tau=2,...,T$.  \eqref{RTpolicy} quantifies the impact of the DA solution $\bm{x}_{d,\tau}^*$ on the RT solution $\bm{z}_{d,\tau}^*$ at time $\tau=1$. \eqref{RTpolicy2} quantifies the impact of the DA solution $\bm{x}_{d,\tau}^*$, DA generator schedule $\bm{p}_{d,\tau-1}^*$  and RT generator adjustment $\bm{p}_{d,\tau-1}^{+-*}$ at previous time $\tau-1$ on the RT solution $\bm{z}_{d,\tau}^*$, $\forall \tau=2,...,T$. We will derive the function between the forecast $\hat{\bm{y}}_d$ and the RT solution $\bm{z}_{d,\tau}^*, \forall \tau=1,...,T$, which is denoted as $f_{d,\tau}^z(\hat{\bm{y}}_d), \forall \tau=1,...,T$. With this function, we will obtain the function between the RT adjustment $\bm{p}_{d,\tau-1}^{+-*},\forall \tau=2,..,T$ and the forecast $\hat{\bm{y}}_d$, denoted as $f_{d,\tau-1}^{+-}(\hat{\bm{y}}_d),\forall \tau=2,..,T$. Since \eqref{DApolicy},\eqref{RTpolicy},\eqref{RTpolicy2} are linear functions, the functions derived based on them are also linear in a neighborhood. The derivation of these functions is provided in the Appendix \ref{policy_apendix}. The function between the optimal dual solutions and the forecast $\hat{\bm{y}}_d$ is a constant function, whose output can be obtained by solving the dual problems of \eqref{DAcompact},\eqref{RTcompact}, and \eqref{RTcompact2}. With these, we are ready to transform the upper-level objective \eqref{bileveldual_a} to a function of the forecast $\hat{\bm{y}}_d$. The details are in the next subsection.

\vspace{-1em}
\subsection{Surrogate Loss Function Design}
By substituting the derived functions $f_{d,\tau}^x(\hat{\bm{y}}_d)$, $f_{d,\tau-1}^p(\hat{\bm{y}}_d)$, and $f_{d,\tau-1}^{+-}(\hat{\bm{y}}_d)$  and dual solutions into the upper-level objective \eqref{bileveldual_a}, the loss function in the neighborhood of the forecast $\hat{\bm{y}}_d$ is,
\begin{equation}\label{valueloss}   \begin{aligned}
&\ell_d(\hat{\bm{y}}_d):=-\bm{\sigma}_{d}^{*\top}(\bm{\psi}_{\text{DA},d}+\bm{F}_{\text{DA}}^y\hat{\bm{y}}_d)-\bm{\nu}_{d,1}^{*\top}(\bm{\psi}_{\text{RT},d,1}+\\
&\bm{F}_{\text{RT}}f_{d,1}^x(\hat{\bm{y}}_d))+\sum_{\tau=2}^{T}-\bm{\zeta}_{d,\tau}^{*\top}(\bm{\psi}_{\text{RT},d,\tau}^\prime+\bm{F}_{\text{RT}}^{\prime x}f_{d,\tau}^x(\hat{\bm{y}}_d)+\\
&\bm{F}_{\text{RT}}^{\prime p}f_{d,\tau-1}^p(\hat{\bm{y}}_d)+\bm{F}_{\text{RT}}^{\prime +-}f_{d,\tau-1}^{+-}(\hat{\bm{y}}_d))
\end{aligned}
\end{equation}

Since the functions $f_{d,\tau}^x(\hat{\bm{y}}_d)$, $f_{d,\tau-1}^p(\hat{\bm{y}}_d)$, and $f_{d,\tau-1}^{+-}(\hat{\bm{y}}_d)$ are linear in a neighoborhood, and the dual decision policies are constant, the loss function $\ell_d(\hat{\bm{y}}_d)$ in the neighborhood of the forecast $\hat{\bm{y}}_d$ is linear, and outputs the overall operating cost \eqref{overallcost} given the forecast $\hat{\bm{y}}_d$. In this way, the forecasting objective aligns with the subsequent operating objective, namely, coordinating DA and RT markets under RES uncertainty.

Naturally, the derivative of $\ell_d(\hat{\bm{y}}_d)$ w.r.t. the forecast $\hat{\bm{y}}_d$, i.e., $\frac{\partial \ell_d(\hat{\bm{y}}_d)}{\partial \hat{\bm{y}}_d}$, measures the marginal impact of the forecast on the overall cost and is a constant. It is,
\begin{equation}\label{gradient}
\begin{split}
&(\frac{\partial \ell_d(\hat{\bm{y}}_d)}{\partial \hat{\bm{y}}_d})^\top=-\bm{\sigma}_d^{*\top}\bm{F}_{\text{DA}}^y-\bm{\nu}_{d,1}^{*\top}\bm{F}_{\text{RT}}\frac{\partial f_{d,1}^x(\hat{\bm{y}}_d)}{\partial \hat{\bm{y}}_d}+\sum_{\tau=2}^T-\bm{\zeta}_{d,\tau}^{*T}\\
&(\bm{F}_{\text{RT}}^{\prime x}\frac{\partial f_{d,\tau}^x(\hat{\bm{y}}_d)}{\partial \hat{\bm{y}}_d}+\bm{F}_{\text{RT}}^{\prime p}\frac{\partial f_{d,\tau-1}^p(\hat{\bm{y}}_d)}{\partial \hat{\bm{y}}_d}+\bm{F}_{\text{RT}}^{\prime +-}\frac{\partial f_{d,\tau-1}^{+-}(\hat{\bm{y}}_d)}{\partial \hat{\bm{y}}_d}) 
\end{split}
\end{equation}
where the derivatives $\frac{\partial f_{d,\tau}^x(\hat{\bm{y}}_d)}{\partial \hat{\bm{y}}_d}$,  $\frac{\partial f_{d,\tau-1}^p(\hat{\bm{y}}_d)}{\partial \hat{\bm{y}}_d}$, $\frac{\partial f_{d,\tau-1}^{+-}(\hat{\bm{y}}_d)}{\partial \hat{\bm{y}}_d}$ in \eqref{gradient} are constants, which measure the marginal impact of the forecast on the DA and RT primal solutions. We have the following proposition regarding the loss function.
\begin{proposition}
   The loss function defined across the entire space of $\hat{\bm{y}}_d$ is a piecewise linear function. Each piece is a linear function described in \eqref{valueloss}.
\end{proposition}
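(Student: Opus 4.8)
The plan is to realize the global loss as affine on the cells of a single finite polyhedral partition, obtained as the common refinement of the partitions induced by each lower-level linear program. I would start from Theorem~\ref{theorem1}: applied to the DA clearing \eqref{DAcompact} with parameter $\hat{\bm{y}}_d$ ranging over the polyhedral domain $\Omega$ fixed by \eqref{bilevel_c}, it guarantees a polyhedral partition of $\Omega$ on each cell of which the active set $\mathcal{J}^a_{\text{DA},d}$ is constant. By Proposition~\ref{prop2} the DA primal policy $f^x_d$ is then affine and, by the stepwise character in Theorem~\ref{theorem1}, the DA dual $\bm{\sigma}^*_d$ is constant on each cell. Since an LP admits only finitely many distinct optimal active sets, this DA partition is finite.

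The second step handles the indirect, chained dependence of the RT solutions. Applying Theorem~\ref{theorem1} to each RT clearing \eqref{RTcompact} and \eqref{RTcompact2} yields finite polyhedral partitions of their respective parameter spaces (the DA solution for $\tau=1$, and the tuple $\bm{x}^*_{d,\tau},\bm{p}^*_{d,\tau-1},\bm{p}^{+-*}_{d,\tau-1}$ for $\tau\ge 2$), on which the RT active sets $\mathcal{J}^a_{\text{RT},d,\tau}$ are constant, the RT primal policies are affine, and the duals $\bm{\nu}^*_{d,1},\bm{\zeta}^*_{d,\tau}$ are constant. I would then pull these partitions back to $\Omega$ along the composite maps $f^x_{d,\tau}$, $f^p_{d,\tau-1}$, $f^{+-}_{d,\tau-1}$ constructed in Appendix~\ref{policy_apendix}, using two elementary facts: the preimage of a polyhedron under an affine map is a polyhedron, and the composition of piecewise-affine maps over finite polyhedral partitions is again piecewise affine over a common polyhedral refinement. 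Proceeding recursively in $\tau$ — so that the $\tau$-th RT policy, which depends on the $(\tau-1)$-th RT solution through the ramping term $\bm{p}^{+-*}_{d,\tau-1}$, is re-expressed as a piecewise-affine function of $\hat{\bm{y}}_d$ — keeps every map piecewise affine and every partition finite.

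Intersecting the DA partition with all the pulled-back RT partitions for $\tau=1,\dots,T$ produces one finite polyhedral partition $R_1,\dots,R_M$ of $\Omega$. On each cell $R_i$ all active sets are simultaneously fixed, so every dual solution is constant and every composite policy $f^x_{d,\tau},f^p_{d,\tau-1},f^{+-}_{d,\tau-1}$ is affine. Substituting these into the upper-level objective \eqref{bileveldual_a} reproduces exactly the expression \eqref{valueloss}: each summand is the inner product of a constant dual vector with an affine function of $\hat{\bm{y}}_d$, hence affine, and a finite sum of affine functions is affine. Therefore $\ell_d$ coincides on each $R_i$ with a linear function of the form \eqref{valueloss} determined by the active sets of that cell, and is piecewise linear across all of $\Omega$.

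The main obstacle I anticipate is the recursive composition in the second step. Because the RT problems are coupled across time through $\bm{p}^{+-*}_{d,\tau-1}$, the parameter feeding the RT clearing at time $\tau$ is a piecewise-affine image of $\hat{\bm{y}}_d$ assembled from all earlier slots, and one must argue that pulling a polyhedral RT partition back through this chain yields finitely many polyhedral cells rather than destroying polyhedrality or finiteness. Verifying that each pullback stays within the class of finite polyhedral partitions, and that degeneracy — where the active set is non-unique on lower-dimensional boundaries — does not corrupt the affine formula on the full-dimensional cells, is where the real care is needed; here I would invoke the earlier assumption that the entries of $\bm{\rho},\bm{\rho}_+,\bm{\rho}_-$ are distinct, which secures unique primal and dual solutions and thus controls this degeneracy.
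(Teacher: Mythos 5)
Your proof is correct and takes essentially the same route as the paper's: invoke Theorem~\ref{theorem1} so that the composite policies $f^x_{d,\tau}(\hat{\bm{y}}_d)$, $f^p_{d,\tau-1}(\hat{\bm{y}}_d)$, $f^{+-}_{d,\tau-1}(\hat{\bm{y}}_d)$ are piecewise linear with constant duals on each piece, then observe that the loss is a linear combination of these with constant dual coefficients, hence piecewise linear with each piece of the form \eqref{valueloss}. The paper compresses this into two sentences; your common-refinement, pullback, and recursive-in-$\tau$ construction simply makes explicit the finiteness, polyhedrality, and composition details that the paper's direct appeal to Theorem~\ref{theorem1} (which is stated only for a single LP in its own parameter) leaves implicit.
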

\begin{proof}
    According to Theorem \ref{theorem1}, $f_{d,\tau}^x(\hat{\bm{y}}_d)$, $f_{d,\tau}^p(\hat{\bm{y}}_d)$, and $f_{d,\tau-1}^{+-}(\hat{\bm{y}}_d)$ are piecewise linear functions defined across the entire space of $\hat{\bm{y}}_d$. Since the loss function is the linear transformation of $f_{d,\tau}^x(\hat{\bm{y}}_d)$, $f_{d,\tau}^p(\hat{\bm{y}}_d)$, and $f_{d,\tau-1}^{+-}(\hat{\bm{y}}_d)$, it is also a piecewise linear function. 
\end{proof}
\begin{remark}
Since local primal decision policies $f_{d,\tau}^x(\hat{\bm{y}}_d)$, $f_{d,\tau-1}^p(\hat{\bm{y}}_d)$, and $f_{d,\tau-1}^{+-}(\hat{\bm{y}}_d)$ are determined by active constraint index sets $\mathcal{J}^a_{\text{DA},d},\mathcal{J}^{ a}_{\text{RT},d,\tau-1},\forall \tau=2,...,T$ of DA and RT market clearing, each piece of the loss function in \eqref{valueloss} is also related with different active constraint index sets.
\end{remark}

 One way is to enumerate all possible active constraint index sets, and derive the corresponding piece of the loss function \cite{Gal+2010}. However, the enumeration can be computationally expensive, particularly when dealing with large-scale optimization. Therefore, we propose to derive the pieces of the loss function in an on-demand manner. Concretely, during training, we only recalculate the piece of the loss function when encountering new active index sets. The details are given in the next Section.


\section{Training Algorithm}
We illustrate the training phase of the forecasting model based on neural networks (NNs). With the loss function, we use batch optimization to train NN. Given a batch of data in $B$ days, the parameter estimation with the derived loss function is formulated as,
\begin{subequations}\label{training_nn}
\begin{alignat}{2} 
&\underset{{\Theta}}{\mathop{\min}}&&  \frac{1}{B\cdot T}\sum_{d=1}^{B}\ell_d(\hat{\bm{y}}_d)\\
    & \text{s.t.} 
    && \hat{\bm{y}}_{d,\tau}=g(\bm{s}_{d,\tau};\Theta),\forall \tau=1,...,T,\forall d=1,...,B\\ 
    &&&0 \leq \hat{\bm{y}}_{d,\tau} \leq \bar{\bm{y}}_{d,\tau},\forall \tau=1,...,T,\forall d=1,...,B\label{trainingb}
\end{alignat}
\end{subequations}

Different from the conventional unconstrained program at the training phase, \eqref{training_nn} is with the box constraint \eqref{trainingb} for the NN output. We design a specific model structure to address this. Specifically, a Sigmoid function, whose output is between 0 and 1, is used as the activation function at the output layer. By multiplying its output with the cap $\bar{\bm{y}}_{d,\tau}$ of each sample, the constraint \eqref{trainingb} is satisfied. NN's structure is in Fig. \ref{MLP}. 
\begin{figure}[!ht]
  \centering
  \includegraphics[scale=0.55]{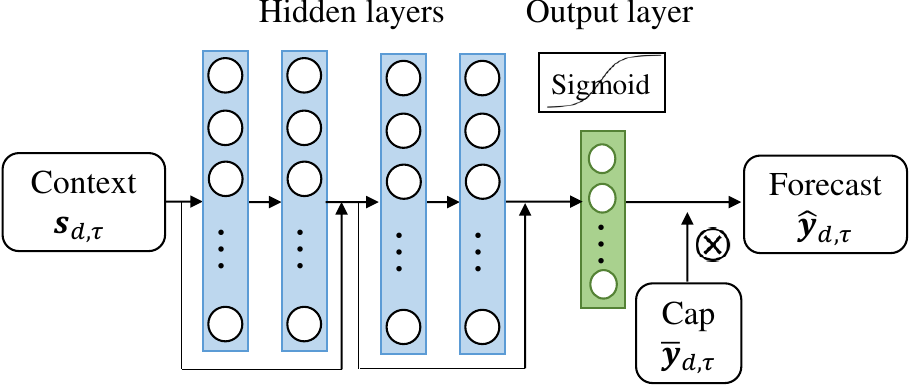}\\
  \caption{The structure of neural network.}
  \label{MLP}
\end{figure}

During the training, the NN, with the parameter $\Theta$ given by any value, outputs the forecast $\hat{\bm{y}}_{d,\tau}$ by \eqref{fm}. The primal and dual problems of DA and RT clearing in \eqref{DAcompact},\eqref{RTcompact},\eqref{RTcompact2} are solved. The active constraint index sets $\mathcal{J}^a_{\text{DA},d},\mathcal{J}^a_{\text{RT},d,\tau-1},\forall \tau=2,...,T$ and the dual solutions $\bm{\sigma}_d^*,\bm{\nu}_{d,1}^*,\bm{\zeta}_{d,\tau}^*,\forall \tau=2,...,T$ are obtained. We check if the active constraint index sets are in the buffers $\Lambda_{\text{DA}}\Lambda_{\text{RT},\tau-1},\forall \tau=2,...,T$. If yes, the stored derivatives $\frac{\partial f_{d,\tau}^x(\hat{\bm{y}}_d)}{\partial \hat{\bm{y}}_d}$,  $\frac{\partial f_{d,\tau-1}^p(\hat{\bm{y}}_d)}{\partial \hat{\bm{y}}_d}$, $\frac{\partial f_{d,\tau-1}^{+-}(\hat{\bm{y}}_d)}{\partial \hat{\bm{y}}_d}$ in the buffers are used for calculating the gradient $ \frac{\partial\ell_d(\hat{\bm{y}}_d)}{\partial\hat{\bm{y}}_d}$ in \eqref{gradient}. If not, we calculate the derivatives associated with the new ones, and calculate the gradient in \eqref{gradient}. Additionally, we store the new active index sets and the associated derivatives in the buffers. The illustration of gradient calculation is in Fig. \ref{code}.

With the gradient $ \frac{\partial\ell_d(\hat{\bm{y}}_d)}{\partial\hat{\bm{y}}_d}$, we use the gradient descent for updating $\Theta$ in \eqref{training_nn}. Recall that $\Pi_\mathcal{J}$ denotes the operator that extracts the elements of a vector corresponding to the indices in the index set $\mathcal{J}$. The subgradient corresponds to the forecast $\hat{\bm{y}}_{d,\tau}$ in $\frac{\partial \ell_d(\hat{\bm{y}}_d)}{\partial \hat{\bm{y}}_d}$ is obtained via $\Pi_{\mathcal{J}_{\tau}}(\frac{\partial \ell_d(\hat{\bm{y}}_d)}{\partial \hat{\bm{y}}_d}) \in \mathbb{R}^N$. The gradient descent for updating the parameter $\Theta$ is,
\begin{equation}\label{gd}
    \Theta \leftarrow \Theta-\alpha \frac{1}{B\cdot T}\sum_{d=1}^B\sum_{\tau=1}^T\Pi_{\mathcal{J}_{\tau}}(\frac{\partial \ell_d(\hat{\bm{y}}_d)}{\partial \hat{\bm{y}}_d})^\top\bigtriangledown_{\Theta}g(\bm{s}_{d,\tau};\Theta)
\end{equation}
where $\alpha$ is the learning rate.

\begin{figure*}[t]
  \centering
  \includegraphics[scale=0.5]{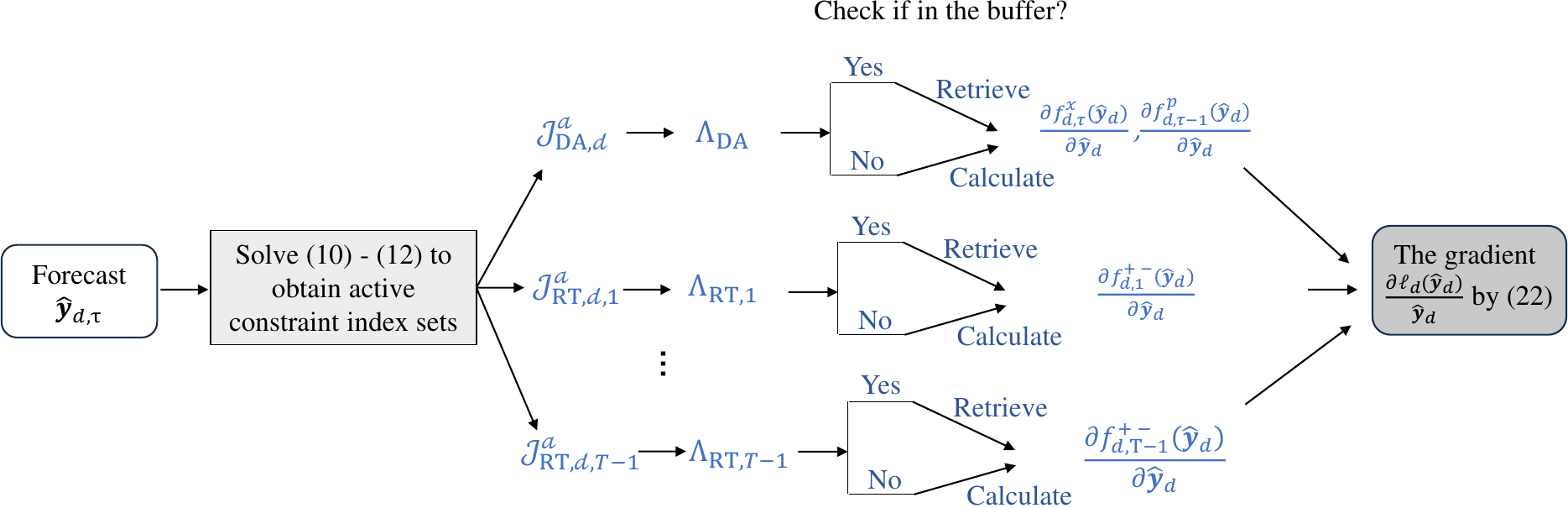}\\
  \caption{The illustration of gradient calculation in the training process.}\label{code}
  \vspace{-1em}
\end{figure*}

\section{Case Study}
We consider a modified version of IEEE 9-bus system.
As shown in Fig. \ref{9bus}, the system consists of 3 loads, 2 wind farms, and 3 generators ($G_1,G_2,G_3$) whose generation needs to be settled in the DA market and can be adjusted for providing up- and down-regulation power in the RT market. 
The  generators submit the marginal generation cost $\bm{\rho}$, the minimum generation power $\underline{\bm{p}}$, the maximum generation power $\overline{\bm{p}}$, the ramping limits $\underline{\bm{r}},\overline{\bm{r}}$
in the DA market, which are provided in Table \ref{DA market}. The marginal up-regulation cost $\bm{\rho}_+$, the marginal down-regulation utility $\bm{\rho}_-$, the marginal opportunity loss $\bm{\rho}_+-\bm{\rho},\bm{\rho}-\bm{\rho}_-$ and the adjustment limits $\overline{\bm{p}^+}$ and $\overline{\bm{p}^-}$, that generators submit in the RT market, are provided in Table \ref{DA market} as well. The yearly demand consumption data is used, with a valley value of 210 \unit{MW}, and a peak value of 265 \unit{MW}. The hourly wind power production in the year of 2012 from GEFCom 2014 is used, whose range is from 0 to 1. The dataset consists of the hourly numerical weather prediction (the predicted wind speed and direction at altitudes of 10 meters and 100 meters) and hourly wind power outputs. 80\% data is divided into the training set, while the rest forms the test set. In this way, the training set has 7008 samples, while the test set has 1752 samples. The wind data is scaled by multiplying a constant according to the considered wind generation capacity, which will be discussed in the following. The demand and wind data are in \cite{ppm2022}.

\begin{figure}[t]
  \centering
  \includegraphics[scale=0.4]{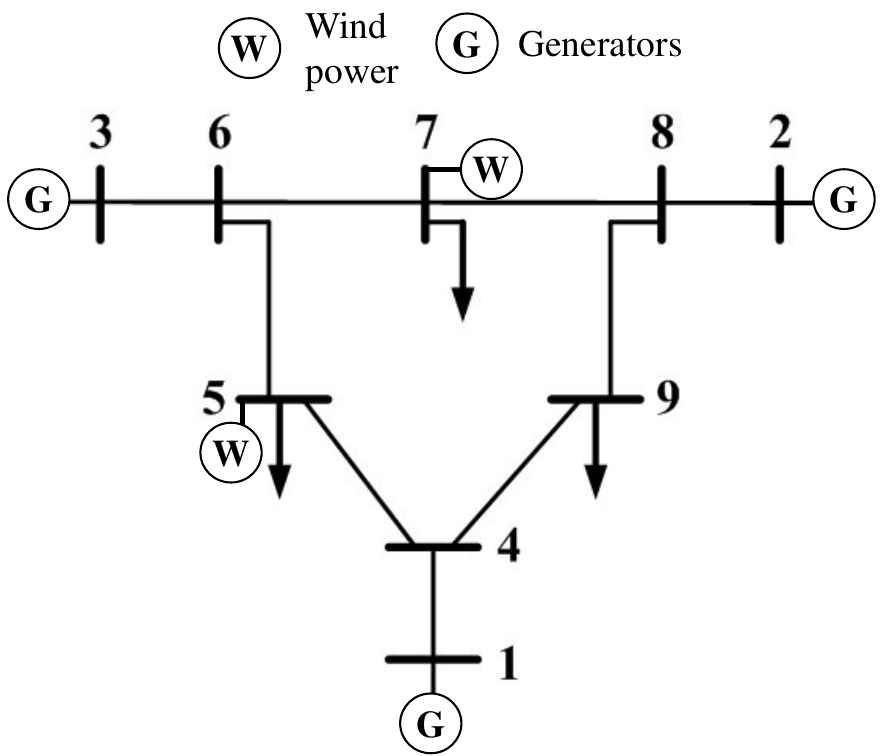}\\
  \caption{The illustration of IEEE 9-bus system.}\label{9bus}
  \vspace{-1em}
\end{figure}

\begin{table}[ht]
\caption{Cost and technical data in DA and RT market of IEEE 9-Bus system.}\label{DA market}
\begin{center}
\begin{tabular}{ c c  c  c  c}
\hline\hline
& & $G_1$ & $G_2$ & $G_3$ \\
\hline
         &\makecell[c]{Marginal generation cost $\bm{\rho}$ (\$/MW)} & 20 & 22 & 24\\
         &\makecell[c]{Minimum generation power $\underline{\bm{p}}$ (MW)} & 0 & 0 & 0\\
         \makecell[c]{DA\\ market}&\makecell[c]{Maximum generation power $\overline{\bm{p}}$ (MW)} & 150 & 200 & 270\\
         &\makecell[c]{Lower ramping limits $\underline{\bm{r}}$ (MW)} & -90 & -80 & -70\\
         &\makecell[c]{Upper ramping limits $\overline{\bm{r}}$ (MW)} & 90 & 80 & 70\\
\hline
&\makecell[c]{Marginal  cost $\bm{\rho}_+$ (\$/MW)} & 50 & 52 & 54\\
&\makecell[c]{Marginal  utility $\bm{\rho}_-$ (\$/MW)} & 18 & 16 & 14\\
&\makecell[c]{Marginal opportunity loss \\for up-regulation $\bm{\rho}_+-\bm{\rho}$ (\$/MW)} & 30 & 30 & 30\\
\makecell[c]{RT \\Market}&\makecell[c]{Marginal opportunity loss \\for down-regulation $\bm{\rho}-\bm{\rho}_-$ (\$/MW)} & 2 & 6 & 10\\
&\makecell[c]{Up-regulation limit $\overline{\bm{p}^+}$ (MW)} & 60 & 60 & 60\\
&\makecell[c]{Down-regulation limit $\overline{\bm{p}^-}$ (MW)} & 60 & 60 & 60\\
\hline\hline
\end{tabular}
\end{center}
\vspace{-2em}
\end{table}

We use a four-layer ResNet as the forecasting model, which has 256 hidden layer units. Its structure is described in Figure~\ref{MLP}. The input context consists of the numeric weather prediction of each wind farm in the system.  We select the type of forecasting model and its associated hyperparameters based on experiences \cite{zhang2024valueoriented}. Based on our past work, a multilayer perceptron (MLP) with two hidden layers and 256 hidden units performs well for value-oriented forecasting in simple decision-making problems. For more complex decision-making problems considered in this work, we use ResNet shown in Fig. \ref{MLP} as the forecasting model, which is composed of two MLPs but with skip layers. This allows the model to have a larger capacity when needed; otherwise, it reduces to the MLP used in \cite{zhang2024valueoriented}. 
We use Root Mean Squared Error (RMSE) on the test set for assessing the \emph{forecast quality}, and the average overall operating cost on the test set, as defined in \eqref{overallcost}, for evaluating the \emph{operation value}. Four benchmark models are used for comparison: Two quality-oriented (\textbf{Qua-E} and \textbf{Qua-Q}), one value-oriented point forecasting approach (\textbf{OptNet}), and a stochastic market clearing (\textbf{Sto-OPT-P}), i.e., 

\begin{enumerate}
\item \textbf{Qua-E}: The forecasting model (ResNet) is trained with the loss function of Mean Squared Error (MSE). The trained model predicts conditional expectations.
\item \textbf{Qua-Q}: The forecasting model is trained with the pinball loss. The trained model offers quantile predictions. Light Gradient Boosted Machine, the winner of GEFCom 2014 \cite{landry2016probabilistic}, is used as the forecasting model.
\item \textbf{OptNet}: The forecasting model (ResNet) is trained by differentiable programming \cite{wahdany2023more,donti2017task}, to minimize overall system operating costs.
\item \textbf{Sto-OPT-P}: For stochastic clearing, 50 wind power scenarios are obtained via k-nearest-neighbors, which is a Prescriptive approach  \cite{bertsimas2020predictive}.  For each sample on the test set, the stochastic clearing is for settling the schedule of generators and wind power in DA. Then, the adjustment in \eqref{RTcompact} and \eqref{RTcompact2} are performed in RT.
\end{enumerate}

In the following, we report the results on the test set. The results on the training set are provided in the Appendix \ref{training results} to assess potential overfitting of the forecasting models and to illustrate the training dynamics.
\vspace{-1em}
\subsection{The Operational Advantage}
The capacity of two wind farms is set as 105 \unit{MW}, respectively, which takes up 79\% of the maximum demand. The nominal level of Qua-Q issued quantile is chosen as $\frac{1}{16}$. Such nominal level is determined as the one beneficial to maximizing wind power profit, i.e., $\frac{\bm{\rho}^1-\bm{\rho}_{-}^1}{\bm{\rho}_+^1-\bm{\rho}_-^1}$ \cite{pinson2007trading}, where $\bm{\rho}^1,\bm{\rho}_+^1,\bm{\rho}_-^1$ represent the DA marginal cost of $G_1$, as well as the RT marginal up-regulation cost and down-regulation utility. The results of RMSE and average operating cost, along with training time per epoch and test time, are in Table \ref{tab3}. 

Since Sto-OPT-P does not need training or relies on a point forecast, its RMSE is not reported. Sto-OPT-P serves as the ideal benchmark \cite{morales2014electricity}, which has the least average operating cost. The proposed approach outperforms all other methods in terms of average operating cost on the test set, except for Sto-OPT-P.  However, its test time is much shorter than Sto-OPT-P, demonstrating computational efficiency. Also, 
we observe that the performance of Sto-OPT-P is heavily influenced by the number of scenarios used. When fewer scenarios, such as 20, are employed, the average operating cost on the test set increases to \$84,478, which is even worse than that achieved by the proposed approach.

The proposed approach exhibits a higher RMSE compared to Qua-E. This underscores the point that \emph{more accurate forecasts don't always translate to better operational performance}. Additionally, since the incremental bidding price $\bm{\rho}_+-\bm{\rho}$ is larger than $\bm{\rho}-\bm{\rho}_-$, the marginal opportunity loss of up-regulation is larger than that of down-regulation. Therefore, Qua-Q, which issues the quantile forecasts with a low proportion level ($\frac{1}{16}$), has better performance than Qua-E. As for the training time, the proposed approach requires a longer training time than Qua-E due to the more complex computation involved in calculating the gradient during the training process. But it is still acceptable, and much shorter than the value-oriented forecasting approach OptNet.

\begin{table}[ht]
\caption{RMSE, average operating cost on the test set and the training/test time of the proposed approach and benchmarks.}\label{tab3}
\begin{center}
\begin{tabular}{ c  c  c  c c c}
\hline\hline
         &\makecell[c]{Proposed} & \makecell[c]{Qua-E} & \makecell[c]{Qua-Q} & \makecell[c]{OptNet} & \makecell[c]{Sto-OPT-P}\\
\hline
    RMSE (\unit{MW}) & 26 & 18 & 29 & 33  & -\\
    \makecell[c]{Average operating\\ cost (\unit{\$})} & 84449 & 86990 & 85154 & 86347 & 84362\\
    \makecell[c]{Training time\\ per epoch} & 25 s & 0.08 s & - & 136 s & -\\
    \makecell[c]{Test time} & 5 s & 5 s & 5 s & 5 s & 64min\\
\hline\hline
\end{tabular}
\end{center}
\vspace{-2em}
\end{table}

Apart from reducing the DA and the RT operating costs, the wind farm profit can also be benefited from the proposed approach. Table \ref{windprofits} summarizes the average profits of wind farms obtained under the proposed approach and Qua-E. The results indicate that wind farms can achieve higher profits with the proposed method compared to Qua-E. The intuition behind this result is as follows: the proposed method is trained to minimize the overall costs of the DA and RT markets, which is equivalent to maximizing the overall social welfare of these markets. Since wind farm profits are a component of overall social welfare, maximizing social welfare implicitly enhances wind farm profits to some extent. In contrast, Qua-E fully ignores the impact of forecasts on wind power profits, leading to its inferior performance.

\begin{table}[h]
\caption{Average profits of wind farms in the test set of the proposed approach and Qua-E.}\label{windprofits}
\begin{center}
\begin{tabular}{ c  c  c  c}
\hline\hline
      & 
     \makecell[c]{Total profits of \\
     wind farms 5 and 7/\$} & \makecell[c]{Profit of \\ wind farm 5/\$}& \makecell[c]{Profit of \\ wind farm 7/\$}\\
\hline
    \makecell[c]{Proposed}  &
    24708  &11140&13568\\
\hline
    \makecell[c]{Qua-E}  &
    17876  &5491&12386\\
\hline\hline
\end{tabular}
\end{center}
\end{table}

Additionally, we present the distribution on the test set of the DA generation schedule of traditional generators, along with their RT up- and down- adjustment in Fig. \ref{powerdis}. The proposed method tends to forecast less power to avoid the costly up-regulation, compared to Qua-E. Therefore, in the DA market, traditional generators produce more power under the proposed method due to the lower renewable energy forecasts. Accordingly, in the RT market, the proposed forecast leads to lower up-regulation needs and higher down-regulation needs compared to Qua-E.

\begin{figure}[t]
  \centering
  \includegraphics[scale=0.6]{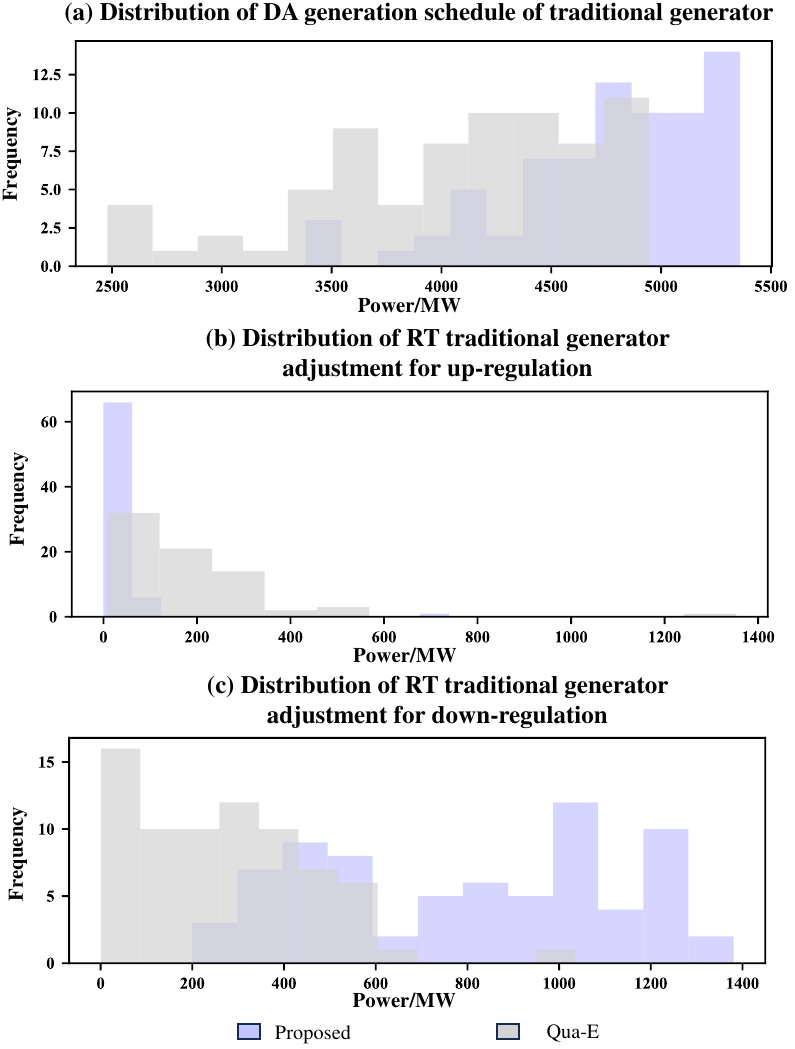}
  \caption{Distribution of hourly generator day-ahead schedules and real-time adjustments over 73 test days.}
\label{powerdis}
\end{figure}

\subsection{The Sensitivity Analysis}
In this section, we compare the proposed approach against Qua-E under different wind power capacities and marginal cost $\bm{\rho}^+$ for up-regulation.


\subsubsection{Performance under Different Wind Power Penetration}

Here, different wind power capacities are considered, i.e., 85 \unit{MW}, 95 \unit{MW}, and 105 \unit{MW} per wind farm. Fig. \ref{Capacities}(a) shows the average operating cost of the proposed approach and Qua-E under different wind power capacities.  Under large wind power capacity, the average cost reduction of the proposed approach is more obvious. For instance, such a reduction is 2.4\% and 2.9\%, respectively, under the wind power capacity of 85 \unit{MW} and 105 \unit{MW}.  The proposed approach has larger operation benefits under larger penetration of wind power. 

\begin{figure}[h]
  \centering
  \includegraphics[scale=0.55]{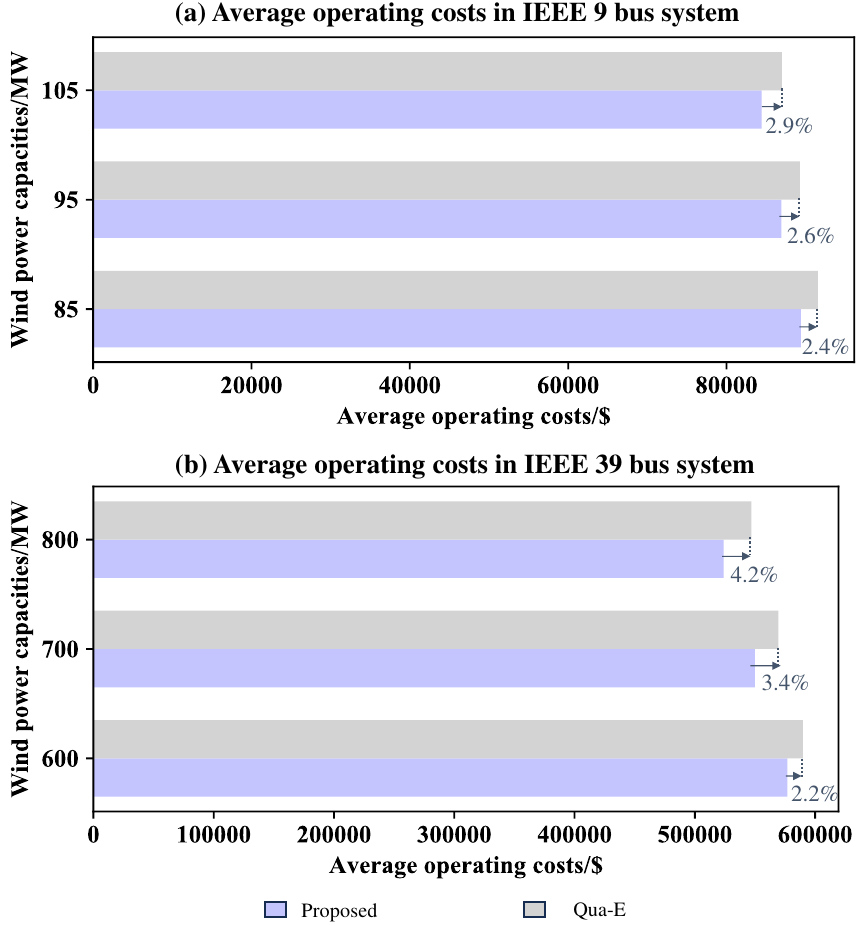}\\
  \caption{Average operating cost under different wind power capacities. Qua-E stands for quality-oriented forecast issuing expectation. }
  \label{Capacities}
\end{figure}

\subsubsection{Performance under Different Up-regulation Cost in RT}
The performance is further tested under various up-regulation marginal costs, along with the marginal opportunity loss $\bm{\rho}_+-\bm{\rho}$ for up-regulation. The marginal opportunity loss $\bm{\rho}-\bm{\rho}_-$ for down-regulation is the same as in Table \ref{DA market}. The capacity of wind power is set to 105 \unit{\MW}. The average operating cost of the proposed approach and Qua-E under two settings are listed in Table \ref{AWS setting}. When the RT market lacks flexibility for up-regulation, the up-regulation marginal cost is high, and the marginal opportunity loss for up-regulation is much larger than that for down-regulation. Therefore, the proposed approach tends to forecast less power than Qua-E to mitigate the risk of costly up-regulation, and results in a significant cost reduction (8\%). When the up-regulation marginal cost is similar to the DA marginal generation cost $\bm{\rho}$, the marginal opportunity loss for up-regulation is lower than that for down-regulation. Therefore, the proposed approach tends to forecast more power to mitigate the risk of costly down-regulation. In such a case, since the marginal opportunity losses of up- and down-regulation are very similar, the cost reduction of the proposed approach is less significant. The forecast profiles for 6 days of wind farm at the node 5 under the two settings are given in Fig. \ref{profile}.

To sum up, the operation advantage of the proposed approach is more evident, under large penetration of wind power, and high up-regulation marginal cost.


\begin{table}[h]
\caption{Average operating cost in different settings of up-regulation costs in RT. The first column gives the RT up-regulation costs and marginal opportunity loss in each setting. Inside the parenthesis, the up-regulation marginal costs are given on the left of the slash, and the marginal opportunity loss is given on the right.}\label{AWS setting}
\begin{center}
\begin{tabular}{ c  c  c  c}
\hline\hline
         Settings& \makecell[c]{Proposed (\$)} & \makecell[c]{Qua-E (\$)}&\makecell[c]{Cost\\ reduction 
 (\%)} \\
\hline
    \makecell[c]{High up-\\regulation marginal cost \\ ($G_1$: 80/60, $G_2$:\\ 82/60, $G_3$: 84/60)} & \makecell[c]{85114} & 92486 & \makecell[c]{8\%}\\
\hline
    \makecell[c]{Low up-\\regulation marginal cost\\ ($G_1$: 21/1, $G_2$:\\ 23/1, $G_3$: 25/1)} & \makecell[c]{81517} & 81677 & \makecell[c]{0.2\%}\\
\hline\hline
\end{tabular}
\end{center}
\end{table}

\vspace{-1em}
\begin{figure}[ht]
  \centering
  \includegraphics[scale=0.55]{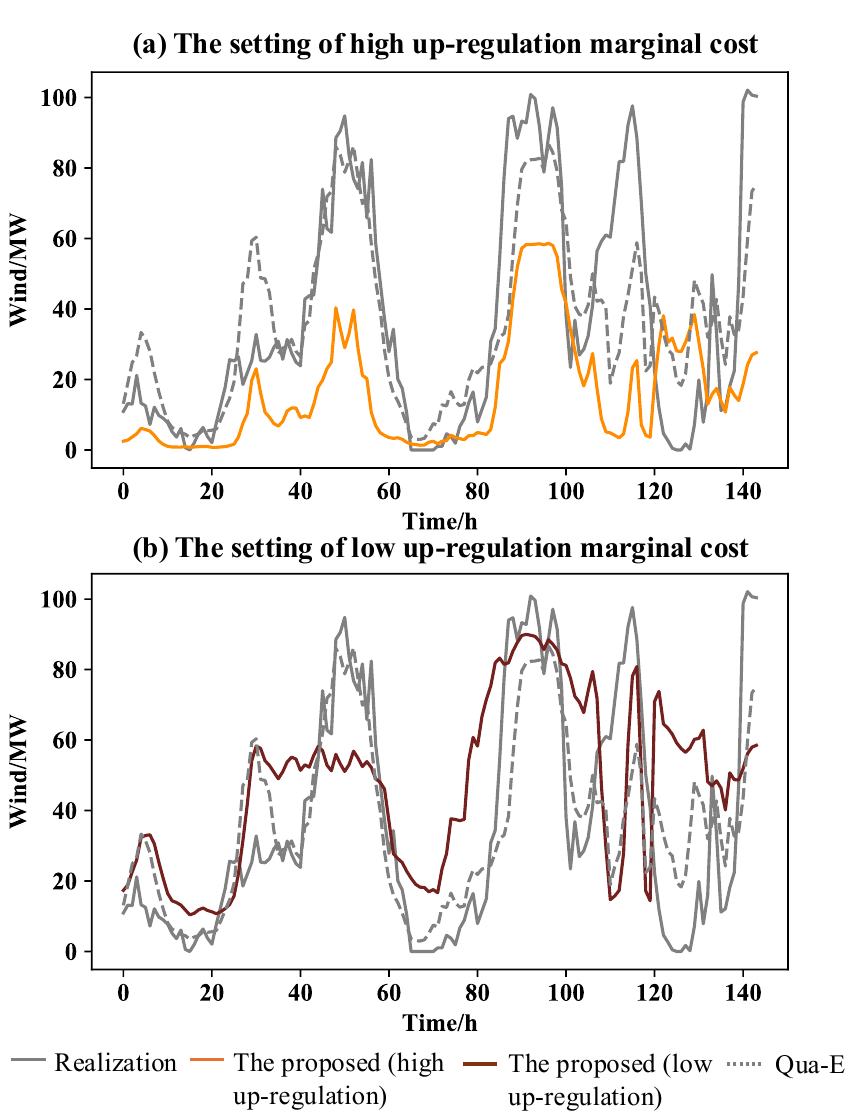}\\
  \caption{The 6-day wind power forecast profiles for a wind farm connected to Node 5 are provided by Qua-E and the proposed approach under both high and low up-regulation cost settings.}\label{profile}
\end{figure}

\subsection{Extending to Larger Scale System}

In the following, we test the proposed approach on the IEEE 39-bus system and a 500-bus system developed in ARPA-e's Grid
Optimization Competition \cite{birchfield2016grid}. The IEEE 39-bus system has 21 loads, 2 wind farms at buses 38 and 39, and 10 generators.  The illustration of the IEEE 39-bus system and parameters for generators are given in our Github repo~\cite{ppm2022}. The average operating costs of the proposed one and Qua-E are shown in Fig. \ref{Capacities}(b). The proposed approach results in lower average operating costs than Qua-E. With increased wind power capacity in the IEEE 39-bus system, the advantage of the proposed approach in decreasing the overall costs is even more pronounced compared to the IEEE 9-bus system.

The 500-bus system consists of 281 loads, 224 generators, and two wind farms located at buses 45 and 46, each with a capacity of 2000 \unit{MW}. Detailed parameters are available in the corresponding GitHub repository \cite{bus500}. We evaluate the proposed approach under two transmission capacity scenarios. In the no-congestion case, transmission line capacities are sufficiently large to prevent congestion. In the congestion case, the capacities are reduced, resulting in congestion on certain lines. We compare the proposed approach with Qua-E under two cases. The results are shown in Table \ref{500bus}.  

The proposed approach achieves lower average operating costs compared to Qua-E. Additionally, for both methods, the operating cost under the congestion scenario is higher than that under the no-congestion scenario. This outcome is expected, as congestion limits system flexibility and leads to higher costs. We also note that training is performed offline, and the computational cost is particularly high for large-scale systems. For example, on the 500-bus system, training for one epoch with 21 batches takes approximately 3 minutes on a 2024 MacBook Pro. The computations required for gradient calculation, involving the clearing of day-ahead and real-time markets, as well as the matrix inversion, are highly demanding for large-scale systems. Future research is needed to address such computational challenges. We also note that, although offline training is computationally expensive, issuing forecasts during the operational phase is as computationally efficient as common forecasting approaches such as Qua-E, since it only involves the forward propagation of a neural network.

\begin{table}[h]
\caption{Average operating cost in different cases of transmission line capacities.}\label{500bus}
\begin{center}
\begin{tabular}{ c  c  c  c}
\hline\hline
         Settings& \makecell[c]{Proposed (\$)} & \makecell[c]{Qua-E (\$)}&\makecell[c]{Cost\\ reduction 
 (\%)} \\
\hline
    \makecell[c]{No-congestion} & \makecell[c]{1631631} & 1648846 & \makecell[c]{1\%}\\
\hline
    \makecell[c]{Congestion} & \makecell[c]{1635312} & 1649040 & \makecell[c]{0.8\%}\\
\hline\hline
\end{tabular}
\end{center}
\end{table}


\section{Conclusions}
We propose a value-oriented renewable energy forecasting approach, for minimizing the expected overall operating cost in the existing deterministic market clearing framework. We analytically derive the loss function for value-oriented renewable energy forecasting in sequential market clearing. The loss function is proved to be piecewise linear when the market clearing is modeled by linear programs. Additionally, we provide the analytical gradient of the loss function with respect to the forecast, which leads to an efficient training strategy. In the case study, compared to quality-oriented forecasting approach trained by MSE, the proposed approach can reduce average operating cost on the test set to 2.9\% for the IEEE 9-bus system. Such an advantage is more obvious under large wind power capacity and high up-regulation costs. Under high up-regulation marginal costs, our approach can reduce the cost by up to 8\%. 

Future work will include derivation of the value-oriented loss function for market clearing modeled by other types of optimization programs, such as quadratic optimization and conic optimization. Additionally, the development of value-oriented RES forecasting depends on the structure of the decision-making problem. If this structure changes, the forecasting model needs to be redeveloped in the current offline training approach. Addressing how to adapt such forecasts to changes in decision-making will be a focus of future work. Additionally, we note that the effectiveness of value-oriented forecasts relative to quality-oriented forecasts is case-dependent, influenced by specific decision-making structures and parameters. In certain scenarios, value-oriented forecasts may converge to quality-oriented ones, resulting in comparable performance. We would like to leave the theoretical analysis of the improvement bounds to future work.


\vspace{-1em}

\bibliographystyle{IEEEtran}
\bibliography{IEEEabrv,mylib}

\begin{thebibliography}{10}
\providecommand{\url}[1]{#1}
\csname url@samestyle\endcsname
\providecommand{\newblock}{\relax}
\providecommand{\bibinfo}[2]{#2}
\providecommand{\BIBentrySTDinterwordspacing}{\spaceskip=0pt\relax}
\providecommand{\BIBentryALTinterwordstretchfactor}{4}
\providecommand{\BIBentryALTinterwordspacing}{\spaceskip=\fontdimen2\font plus
\BIBentryALTinterwordstretchfactor\fontdimen3\font minus \fontdimen4\font\relax}
\providecommand{\BIBforeignlanguage}[2]{{%
\expandafter\ifx\csname l@#1\endcsname\relax
\typeout{** WARNING: IEEEtran.bst: No hyphenation pattern has been}%
\typeout{** loaded for the language `#1'. Using the pattern for}%
\typeout{** the default language instead.}%
\else
\language=\csname l@#1\endcsname
\fi
#2}}
\providecommand{\BIBdecl}{\relax}
\BIBdecl

\bibitem{morales2013integrating}
J.~M. Morales, A.~J. Conejo, H.~Madsen, P.~Pinson, and M.~Zugno, \emph{Integrating renewables in electricity markets: operational problems}.\hskip 1em plus 0.5em minus 0.4em\relax Springer Science \& Business Media, 2013, vol. 205.

\bibitem{RES}
``Share of renewables in electricity production,'' \url{https://yearbook.enerdata.net/renewables/renewable-in-electricity-production-share.html}.

\bibitem{kazempour2017value}
J.~Kazempour and B.~F. Hobbs, ``Value of flexible resources, virtual bidding, and self-scheduling in two-settlement electricity markets with wind generation—part i: Principles and competitive model,'' \emph{IEEE Transactions on Power Systems}, vol.~33, no.~1, pp. 749--759, 2017.

\bibitem{morales2012pricing}
J.~M. Morales, A.~J. Conejo, K.~Liu, and J.~Zhong, ``Pricing electricity in pools with wind producers,'' \emph{IEEE Transactions on Power Systems}, vol.~27, no.~3, pp. 1366--1376, 2012.

\bibitem{zavala2017stochastic}
V.~M. Zavala, K.~Kim, M.~Anitescu, and J.~Birge, ``A stochastic electricity market clearing formulation with consistent pricing properties,'' \emph{Operations Research}, vol.~65, no.~3, pp. 557--576, 2017.

\bibitem{kazempour2018stochastic}
J.~Kazempour, P.~Pinson, and B.~F. Hobbs, ``A stochastic market design with revenue adequacy and cost recovery by scenario: Benefits and costs,'' \emph{IEEE Transactions on Power Systems}, vol.~33, no.~4, pp. 3531--3545, 2018.

\bibitem{morales2014electricity}
J.~M. Morales, M.~Zugno, S.~Pineda, and P.~Pinson, ``Electricity market clearing with improved scheduling of stochastic production,'' \emph{European Journal of Operational Research}, vol. 235, no.~3, pp. 765--774, 2014.

\bibitem{zhao2022uncertainty}
D.~Zhao, V.~Dvorkin, S.~Delikaraoglou, A.~Botterud \emph{et~al.}, ``Uncertainty-informed renewable energy scheduling: A scalable bilevel framework,'' \emph{arXiv preprint arXiv:2211.13905}, 2022.

\bibitem{dvorkin2018setting}
V.~Dvorkin, S.~Delikaraoglou, and J.~M. Morales, ``Setting reserve requirements to approximate the efficiency of the stochastic dispatch,'' \emph{IEEE Transactions on Power Systems}, vol.~34, no.~2, pp. 1524--1536, 2018.

\bibitem{stratigakos2022prescriptive}
A.~Stratigakos, S.~Camal, A.~Michiorri, and G.~Kariniotakis, ``Prescriptive trees for integrated forecasting and optimization applied in trading of renewable energy,'' \emph{IEEE Transactions on Power Systems}, vol.~37, no.~6, pp. 4696--4708, 2022.

\bibitem{chen2021feature}
X.~Chen, Y.~Yang, Y.~Liu, and L.~Wu, ``Feature-driven economic improvement for network-constrained unit commitment: A closed-loop predict-and-optimize framework,'' \emph{IEEE Transactions on Power Systems}, vol.~37, no.~4, pp. 3104--3118, 2021.

\bibitem{morales2023prescribing}
J.~M. Morales, M.~Mu{\~n}oz, and S.~Pineda, ``Prescribing net demand for two-stage electricity generation scheduling,'' \emph{Operations Research Perspectives}, vol.~10, p. 100268, 2023.

\bibitem{garcia2021application}
J.~D. Garcia, A.~Street, T.~Homem-de Mello, and F.~D. Munoz, ``Application-driven learning via joint prediction and optimization of demand and reserves requirement,'' \emph{arXiv preprint arXiv:2102.13273}, 2021.

\bibitem{donti2017task}
P.~Donti, B.~Amos, and J.~Z. Kolter, ``Task-based end-to-end model learning in stochastic optimization,'' \emph{Advances in neural information processing systems}, vol.~30, 2017.

\bibitem{wahdany2023more}
D.~Wahdany, C.~Schmitt, and J.~L. Cremer, ``More than accuracy: end-to-end wind power forecasting that optimises the energy system,'' \emph{Electric Power Systems Research}, vol. 221, p. 109384, 2023.

\bibitem{zhang2024valueoriented}
Y.~Zhang, M.~Jia, H.~Wen, Y.~Bian, and Y.~Shi, ``Toward value-oriented renewable energy forecasting: An iterative learning approach,'' \emph{IEEE Transactions on Smart Grid}, 2024.

\bibitem{pinson2007trading}
P.~Pinson, C.~Chevallier, and G.~N. Kariniotakis, ``Trading wind generation from short-term probabilistic forecasts of wind power,'' \emph{IEEE transactions on Power Systems}, vol.~22, no.~3, pp. 1148--1156, 2007.

\bibitem{elmachtoub2022smart}
A.~N. Elmachtoub and P.~Grigas, ``Smart “predict, then optimize”,'' \emph{Management Science}, vol.~68, no.~1, pp. 9--26, 2022.

\bibitem{zimmerman2010matpower}
R.~D. Zimmerman, C.~E. Murillo-S{\'a}nchez, and R.~J. Thomas, ``Matpower: Steady-state operations, planning, and analysis tools for power systems research and education,'' \emph{IEEE Transactions on power systems}, vol.~26, no.~1, pp. 12--19, 2010.

\bibitem{hermann2022complementarity}
A.~Hermann, T.~V. Jensen, J.~{\O}stergaard, and J.~Kazempour, ``A complementarity model for electric power transmission-distribution coordination under uncertainty,'' \emph{European Journal of Operational Research}, vol. 299, no.~1, pp. 313--329, 2022.

\bibitem{ferber2020mipaal}
A.~Ferber, B.~Wilder, B.~Dilkina, and M.~Tambe, ``Mipaal: Mixed integer program as a layer,'' in \emph{Proceedings of the AAAI Conference on Artificial Intelligence}, vol.~34, no.~02, 2020, pp. 1504--1511.

\bibitem{borrelli2003geometric}
F.~Borrelli, A.~Bemporad, and M.~Morari, ``Geometric algorithm for multiparametric linear programming,'' \emph{Journal of optimization theory and applications}, vol. 118, pp. 515--540, 2003.

\bibitem{Gal+2010}
T.~Gal, \emph{Postoptimal Analyses, Parametric Programming, and Related Topics}.\hskip 1em plus 0.5em minus 0.4em\relax Berlin, New York: De Gruyter, 2010.

\bibitem{ppm2022}
``Description of case study,'' \url{ https://github.com/yufan0157/Value-oriented-forecasting-for-sequential-market-clearing}.

\bibitem{landry2016probabilistic}
M.~Landry, T.~P. Erlinger, D.~Patschke, and C.~Varrichio, ``Probabilistic gradient boosting machines for gefcom2014 wind forecasting,'' \emph{International Journal of Forecasting}, vol.~32, no.~3, pp. 1061--1066, 2016.

\bibitem{bertsimas2020predictive}
D.~Bertsimas and N.~Kallus, ``From predictive to prescriptive analytics,'' \emph{Management Science}, vol.~66, no.~3, pp. 1025--1044, 2020.

\bibitem{birchfield2016grid}
A.~B. Birchfield, T.~Xu, K.~M. Gegner, K.~S. Shetye, and T.~J. Overbye, ``Grid structural characteristics as validation criteria for synthetic networks,'' \emph{IEEE Transactions on power systems}, vol.~32, no.~4, pp. 3258--3265, 2016.

\bibitem{bus500}
``Description of 500-bus system,'' \url{https://github.com/power-grid-lib/pglib-opf/blob/master/pglib_opf_case500_goc.m}.

\bibitem{luth2024electrolysis}
A.~L{\"u}th, Y.~Werner, R.~Egging-Bratseth, and J.~Kazempour, ``Electrolysis as a flexibility resource on energy islands: The case of the north sea,'' \emph{Energy Policy}, vol. 185, p. 113921, 2024.

\end{thebibliography}

\appendices
\section{Compatibility of the Market Clearing Model with Energy Storage}\label{compat}

We follow the practice in \cite{luth2024electrolysis} using the linear model for energy storage. We aim to show that the structure of market clearing remains the same as those in \eqref{DAcompact}, \eqref{RTcompact}, and \eqref{RTcompact2}, when incorporating energy storage. In the DA market clearing, let $\bm{L}_{d,\tau},\bm{G}_{d,\tau}$ denote the charge and discharge power of energy storage at time $\tau$ on day $d$, and $\bm{S}_{d,\tau}$ denote the stored energy. The model of energy storage in DA market is, 
\begin{subequations}\label{DAES_m}
\begin{alignat}{2}
& \mathop{\min}_{\bm{L}_{d,\tau},\bm{G}_{d,\tau},\bm{S}_{d,\tau}}   &&\sum_{\tau=1}^T\bm{\rho}^{G\top}\bm{G}_{d,\tau}-\bm{\rho}^{L\top}\bm{L}_{d,\tau}\label{DAobja}\\ 
    & \text{s.t.} && 0 \leq \bm{L}_{d,\tau} \leq \overline{\bm{L}}, \forall \tau=1,...,T\label{DAconstra}\\
&&& 0 \leq \bm{G}_{d,\tau} \leq \overline{\bm{G}}, \forall \tau=1,...,T\label{DAconstrb}\\
&&& \bm{S}_{d,\tau}=\bm{S}_{d,\tau-1}+\eta\bm{L}_{d,\tau}-\bm{G}_{d,\tau}/\eta,\nonumber\\
&&&\forall \tau=2,...,T\label{DAconstrc}\\
&&& \bm{S}_{d,1}=\bm{S}_0+\eta\bm{L}_{d,1}-\bm{G}_{d,1}/\eta,\forall \tau=1\label{DAconstrd}\\
&&& \underline{\bm{S}} \leq \bm{S}_{d,\tau} \leq \overline{\bm{S}}, \forall \tau=1,...,T\label{DAconstre},
\end{alignat} 
\end{subequations}
where $\bm{\rho}^G$ and $\bm{\rho}^L$ are the marginal cost and utility for charge and discharge, and $\eta$ is the charging efficiency. \eqref{DAconstra} and \eqref{DAconstrb} limit the charge and discharge power within their respective upper bounds. \eqref{DAconstrc} and \eqref{DAconstrd} determine the stored energy based on the charged and discharged power, while \eqref{DAconstre} constrains it within the lower and upper bounds. After the DA market clearing, the optimal solutions are denoted as $\bm{L}_{d,\tau}^*,\bm{G}_{d,\tau}^*,\bm{S}_{d,\tau}^*$.

In the RT market, energy storage operates similarly to conventional electricity generation technologies in their discharge mode and similarly to the demand in the charge mode. The adjustments for up- and down- regulation in the discharge mode are denoted as $\bm{G}_{d,\tau}^+,\bm{G}_{d,\tau}^-$, respectively. The adjustments for up- and down- regulation in the charge mode are denoted as $\bm{L}_{d,\tau}^+,\bm{L}_{d,\tau}^-$, respectively. The model of the energy storage in the RT market at time $\tau=1,...,T$ is,
\begin{subequations}\label{RTES_m}
\begin{alignat}{2}
& \mathop{\min}_{\bm{L}_{d,\tau}^+,\bm{G}_{d,\tau}^+,\bm{L}_{d,\tau}^-,\bm{G}_{d,\tau}^-}   &&\bm{\rho}^{G\top}_+\bm{G}_{d,\tau}^++\bm{\rho}^{L\top}_+\bm{L}_{d,\tau}^+-\nonumber\\
&&& \qquad(\bm{\rho}^{G\top}_-\bm{G}_{d,\tau}^-+\bm{\rho}^{L\top}_-\bm{L}_{d,\tau}^-)\label{RTESobja_m}\\ 
    & \text{s.t.} && 0 \leq \bm{L}^*_{d,\tau}-\bm{L}^+_{d,\tau}+\bm{L}^-_{d,\tau} \leq \overline{\bm{L}}\label{RTconstra_m}\\
&&& 0 \leq \bm{G}^*_{d,\tau}+\bm{G}^+_{d,\tau}-\bm{G}^-_{d,\tau} \leq \overline{\bm{G}}\label{RTconstrb_m}\\
&&& \underline{\bm{S}} \leq \bm{S}^*_{d,\tau}+\eta(-\bm{L}^+_{d,\tau}+\bm{L}^-_{d,\tau})-\nonumber\\
&&&\qquad (\bm{G}^+_{d,\tau}-\bm{G}^-_{d,\tau})/\eta \leq \overline{\bm{S}}\label{RTconstrc_m}.
\end{alignat} 
\end{subequations}
where $\bm{\rho}_+^G,\bm{\rho}_+^L$ are the marginal cost for up-regulation in the discharge and charge modes, respectively. $\bm{\rho}_-^G,\bm{\rho}_-^L$ are the marginal utility for down-regulation in the discharge and charge modes, respectively.
The constraints in \eqref{RTES_m} ensure that, after the adjustment, the charge, discharge, and stored energy constraints remain satisfied. With the energy storage models defined in \eqref{DAES_m} and \eqref{RTES_m}, we are ready to present the formulations of the DA and RT market clearing problems with energy storage. The DA market clearing is,
\begin{subequations}\label{DA_ESS}
\begin{alignat}{2}
&\mathop{\min}_{\bm{x}_{d}}&&\ \bm{\rho}^\top\bm{p}_{d,\tau}+\bm{\rho}^{G\top}\bm{G}_{d,\tau}-\bm{\rho}^{L\top}\bm{L}_{d,\tau}\\
& \text{s.t.} &&   \bm{1}^\top(\bm{p}_{d,\tau}+\bm{w}_{d,\tau}-\bm{L}_{d,\tau}+\bm{G}_{d,\tau})=\bm{1}^\top\bm{l}_{d,\tau},\nonumber\\
&&&\forall \tau=1,...,T
    \\ 
    &&&  
    -\overline{\bm{f}}\leq \bm{H}(\bm{p}_{d,\tau}+\bm{w}_{d,\tau}-\bm{l}_{d,\tau}-\bm{L}_{d,\tau}+\bm{G}_{d,\tau})\leq\overline{\bm{f}},\nonumber\\
    &&&\forall \tau=1,...,T\\
    &&& \eqref{DAd},\eqref{DAe},\eqref{DAf},\eqref{DAconstra},\eqref{DAconstrb},\eqref{DAconstrc},\eqref{DAconstrd},\eqref{DAconstre} \nonumber
\end{alignat}
\end{subequations}

The optimal solution of \eqref{DA_ESS} is denoted as $\bm{x}_d^*=[\bm{x}_{d,\tau}^*]_{\tau=1}^T=[\bm{p}_{d,\tau}^*;\bm{w}_{d,\tau}^*;\bm{L}_{d,\tau}^*;\bm{G}_{d,\tau}^*;\bm{S}_{d,\tau}^*]_{\tau=1}^T$, which includes the generation schedule $\bm{p}_{d,\tau}^*,\bm{w}_{d,\tau}^*$ of flexible generators and RES, as well as the DA schedule $\bm{L}_{d,\tau}^*,\bm{G}_{d,\tau}^*,\bm{S}_{d,\tau}^*$ of energy storage.

The RT market clearing with energy storage at time $\tau=1$ is,
\begin{subequations}\label{RTreform_ES}
\begin{alignat}{2}
&\mathop{\min}_{\bm{z}_{d,\tau}}&&\ \bm{\rho}_+^\top\bm{p}_{d,\tau}^+-\bm{\rho}_-^\top\bm{p}_{d,\tau}^-+\bm{\rho}^{G\top}_+\bm{G}_{d,\tau}^++\bm{\rho}^{L\top}_+\bm{L}_{d,\tau}^+-\nonumber\\
&&& \ (\bm{\rho}^{G\top}_-\bm{G}_{d,\tau}^-+\bm{\rho}^{L\top}_-\bm{L}_{d,\tau}^-)\\
& \text{s.t.} && \bm{1}^\top(\bm{p}_{d,\tau}^+-\bm{p}^-_{d,\tau}-\bm{\kappa}_{d,\tau}+\bm{G}^+_{d,\tau}-\bm{G}^-_{d,\tau}\nonumber\\
&&&+\bm{L}^+_{d,\tau}-\bm{L}^-_{d,\tau})=-\bm{1}^\top(\bm{y}_{d,\tau}-\bm{w}^*_{d,\tau})\label{RTaES}
    \\ 
    &&& 
    -\overline{\bm{f}}-\bm{H}(\bm{p}_{d,\tau}^*+\bm{w}^*_{d,\tau}-\bm{l}_{d,\tau})\leq \bm{H}(\bm{p}_{d,\tau}^+-\bm{p}_{d,\tau}^--\bm{\kappa}_{d,\tau}\nonumber\\
    &&& \qquad +\bm{G}^+_{d,\tau}-\bm{G}^-_{d,\tau}+\bm{L}^+_{d,\tau}-\bm{L}^-_{d,\tau}+\bm{y}_{d,\tau}-\bm{w}^*_{d,\tau})\nonumber\\
    &&&\leq\overline{\bm{f}}-\bm{H}(\bm{p}_{d,\tau}^*+\bm{w}^*_{d,\tau}-\bm{l}_{d,\tau})\label{RTbES}\\
&&& \eqref{RTd},\eqref{RTe},\eqref{reformuRT1simp},\eqref{RTg},\eqref{RTconstra_m},\eqref{RTconstrb_m},\eqref{RTconstrc_m}
\end{alignat}
\end{subequations}

The RT market clearing with energy storage at time $\tau=2,...,T$ becomes,
\begin{subequations}\label{RTreform2ES}
\begin{alignat}{2}
&\mathop{\min}_{\bm{z}_{d,\tau}}&&\ \bm{\rho}_+^\top\bm{p}_{d,\tau}^+-\bm{\rho}_-^\top\bm{p}_{d,\tau}^-+\bm{\rho}^{G\top}_+\bm{G}_{d,\tau}^++\bm{\rho}^{L\top}_+\bm{L}_{d,\tau}^+-\nonumber\\
&&& \ (\bm{\rho}^{G\top}_-\bm{G}_{d,\tau}^-+\bm{\rho}^{L\top}_-\bm{L}_{d,\tau}^-)\\
& \text{s.t.} && \eqref{RTaES},\eqref{RTbES},\eqref{RTd},\eqref{RTe},\eqref{reformuRT1simp},\eqref{reformuRT2simp},\eqref{RTg},\eqref{RTconstra_m},\eqref{RTconstrb_m},\eqref{RTconstrc_m}
\end{alignat}
\end{subequations}
where $\bm{z}_{d,\tau}=[\bm{p}_{d,\tau}^+;\bm{p}_{d,\tau}^-;\bm{\kappa}_{d,\tau};\bm{L}_{d,\tau}^+;\bm{L}_{d,\tau}^-;\bm{G}_{d,\tau}^+;\bm{G}_{d,\tau}^-]$ is the collection of RT decision variables.

The compact forms of \eqref{DA_ESS}, \eqref{RTreform_ES}, and \eqref{RTreform2ES} are structurally the same as those in \eqref{DAcompact}, \eqref{RTcompact}, and \eqref{RTcompact2}, except that the constant coefficients differ to accommodate the decision variables associated with energy storage.

\section{Proof of Cost Recovery and Revenue Adequacy}\label{proof_property}

The market clearing \eqref{DA} is equivalent to an equilibrium model that captures the profit maximization problems of traditional generators and RES producers. First, we present the electricity prices, which are determined by the dual variables listed after the colons in \eqref{DAb} and \eqref{DAc}.
\begin{equation}
     \bm{\lambda}_{d,\tau}=\gamma_{d,\tau}\bm{1}+\bm{H}^\top(\underline{\bm{\mu}}_{d,\tau}-\overline{\bm{\mu}}_{d,\tau})
 \end{equation}
 where $\bm{\lambda}_{d,\tau}=[\lambda_{d,\tau,n}]_{n=1}^N$. $\lambda_{d,\tau,n}$ is the electricity price at the node $n$. The profit maximization problem of the traditional generator connected to the node $n$ is,
\begin{subequations}\label{GP}
\begin{alignat}{2}
& \mathop{\max}_{p_{d,\tau,n}}   &&\sum_{\tau=1}^T(\lambda_{d,\tau,n}-\rho_{n})p_{d,\tau,n}\label{Ga}\\ 
    & \text{s.t.} &&  0 \leq p_{d,\tau,n} \leq \overline{p}_n: \underline{\eta}_{d,\tau,n},\overline{\eta}_{d,\tau,n},\forall \tau=1,...,T \label{Gb}\\
    &&& -\overline{r}_n \leq p_{d,\tau,n}-p_{d,\tau-1,n} \leq \overline{r}_n: \underline{\psi}_{d,\tau,n},\overline{\psi}_{d,\tau,n},\nonumber\\
    &&&\forall \tau=2,...,T\label{Gc}
\end{alignat}
\end{subequations}
where $\overline{p}_n,\overline{r}_n$ are the generation and ramping limits, and $\overline{\bm{p}}=[\overline{p}_n]_{n=1}^N,\overline{\bm{r}}=[\overline{r}_n]_{n=1}^N$. $p_{d,\tau,n}$ and $\rho_n$ is the generator schedule and its marginal cost, and $\bm{p}_{d,\tau}=[p_{d,\tau,n}]_{n=1}^N,\bm{\rho}=[\rho_n]_{n=1}^N$ 
The profit maximization problem of RES connected to the node $n$ is,
\begin{subequations}\label{WP}
\begin{alignat}{2}
& \mathop{\max}_{w_{d,\tau,n}}   &&\sum_{\tau=1}^T \lambda_{d,\tau,n}w_{d,\tau,n}\label{Ra}\\ 
    & \text{s.t.} &&  0 \leq w_{d,\tau,n} \leq \hat{y}_{d,\tau,n}: \underline{\delta}_{d,\tau,n},\overline{\delta}_{d,\tau,n},\forall \tau=1,...,T \label{Wb}
\end{alignat}
\end{subequations} 
where $w_{d,\tau,n}$ is the RES schedule with its RIEQ $\hat{y}_{d,\tau,n}$.

Also, the equilibrium model includes the power balance and transmission power limits as market constraints,
\begin{equation}\label{EP}
    \eqref{DAb},\eqref{DAc}
\end{equation}

Based on the equilibrium model given by \eqref{GP}, \eqref{WP}, and \eqref{EP}, we first prove the cost recovery property, followed by revenue adequacy.

\textbf{(a) Proof of cost recovery}

Cost recovery means that profits of market players are nonnegative at the market clearing solutions. Since the DA market clearing in \eqref{DA} determines the schedules of traditional generators and RES, cost recovery specifically ensures that their profits remain nonnegative given the market clearing schedules, 
\begin{equation}\label{G_profit}
    \sum_{\tau=1}^T(\lambda_{d,\tau,n}^*-\rho_n)p_{d,\tau,n}^* \geq 0
\end{equation}

\begin{equation}\label{W_profit}
    \sum_{\tau=1}^T \lambda_{d,\tau,n}^* w_{d,\tau,n}^* \geq 0
\end{equation}

To prove that conditions \eqref{G_profit} and \eqref{W_profit} hold, we derive the optimal dual objectives of problems \eqref{GP} and \eqref{WP}. We get,
\begin{equation}\label{G_profit_dual}
    \sum_{\tau=1}^T \overline{\eta}^*_{d,\tau,n}\overline{p}_n+\sum_{\tau=2}^T(\overline{\psi}_{d,\tau,n}^*+\underline{\psi}_{d,\tau,n}^*)\overline{r}_n
\end{equation}
\begin{equation}\label{W_profit_dual}
\sum_{\tau=1}^T\hat{y}_{d,\tau,n}\overline{\delta}^*_{d,\tau,n}
\end{equation}

Since the optimal dual solutions $\overline{\eta}^*_{d,\tau,n},\overline{\psi}_{d,\tau,n}^*,\underline{\psi}_{d,\tau,n}^*,\overline{\delta}^*_{d,\tau,n} \geq 0$, \eqref{G_profit_dual} and \eqref{W_profit_dual} are nonnegative. Also, since the strong duality holds for the convex problems of \eqref{GP} and \eqref{WP}, the optimal dual objectives \eqref{G_profit_dual},\eqref{W_profit_dual} equal the optimal primal objectives on the left of \eqref{G_profit} and \eqref{W_profit}, respectively. Therefore, the conditions  \eqref{G_profit} and \eqref{W_profit} hold.

\textbf{(b) Proof of revenue adequacy}

Revenue adequacy implies that, at the optimal solution, the total payment made by the load to the market operator equals the total payment made by the market operator to traditional and renewable generators , as well as to the transmission line operator. That is,
\begin{equation}\label{rev balance}
\begin{aligned}
&\sum_{n=1}^N \sum_{\tau=1}^T \lambda^*_{d,\tau,n} l_{d,\tau,n}  = \\
&\sum_{n=1}^N \sum_{\tau=1}^T [\lambda^*_{d,\tau,n}p^*_{d,\tau,n}+\lambda^*_{d,\tau,n}w^*_{d,\tau,n}+\sum_{m \in \Omega_n} \lambda^*_{d,\tau,n} f_{d,\tau,mn}^*]
\end{aligned}
\end{equation}
where $\Omega_n$ is the set of nodes connected to the node $n$ and $f_{d,\tau,mn}^*$ is the power flow from the node $m$ to the node $n$. In particular, the transmission line operator acts as a spatial arbitrageur, purchasing power at a lower-priced bus and selling it at a higher-priced bus.

The nodal power balance implies that,
\begin{equation}\label{Neq}
\begin{aligned}    &l_{d,\tau,n}=p^*_{d,\tau,n}+w^*_{d,\tau,n}+\sum_{m \in \Omega_n} f_{d,\tau,mn}^*,\\
&\forall \tau=1,...,T,n=1,...,N.
\end{aligned}
\end{equation}

By multiplying the nodal equalities \eqref{Neq} with the price $\lambda_{d,\tau,n}^*$ and summing them up across all nodes and time periods, we get \eqref{rev balance}.

\section{Derivation of the functions $f_{d,\tau}^x(\hat{\bm{y}}_d)$, $f_{d,\tau-1}^p(\hat{\bm{y}}_d)$, $f_{d,\tau}^z(\hat{\bm{y}}_d)$, and $f_{d,\tau-1}^{+-}(\hat{\bm{y}}_d)$}\label{policy_apendix}
In the following, we show how to derive the functions  $f_{d,\tau}^x(\hat{\bm{y}}_d)$, $f_{d,\tau-1}^p(\hat{\bm{y}}_d)$, $f_{d,\tau}^z(\hat{\bm{y}}_d)$, and $f_{d,\tau-1}^{+-}(\hat{\bm{y}}_d)$.
We define an operator $\Pi_{\mathcal{J}}: h(\bm{x}) \mapsto \Tilde{h}(\bm{x})$, where $h(\bm{x})=\bm{A}\bm{x}+\bm{b}$ is a linear function with the parameter $\bm{A},\bm{b}$, and $\mathcal{J}$ is the row index subset of $\bm{A},\bm{b}$. The output $\Tilde{h}(\bm{x})$ of the operator is also a linear function, where $\Tilde{h}(\bm{x})=\bm{A}[\mathcal{J}]\bm{x}+\bm{b}[\mathcal{J}]$ and $\bm{A}[\mathcal{J}],\bm{b}[\mathcal{J}]$ are the sub-matrix and sub-vector of $\bm{A},\bm{b}$.

Let $\mathcal{I}_{\text{DA},d,\tau}$ denote the row index set corresponding to  $\bm{x}_{d,\tau}^*$ within $\bm{x}_d^*$. The function that maps $\hat{\bm{y}}_d$ to $\bm{x}_{d,\tau}^*$ is,
\begin{equation}\label{functionxd}
 \begin{split}   &f^x_{d,\tau}(\hat{\bm{y}}_d):=\bm{x}_{d,\tau}^*=\Pi_{\mathcal{I}_{\text{DA},d,\tau}}(f^x_{d}(\hat{\bm{y}}_d)),\forall \tau=1,...,T
 \end{split}
\end{equation}
The coefficients are determined by the coefficients of $f^x_{d}(\hat{\bm{y}}_d)$ in \eqref{DApolicy}, whose row indexes belong to the set $\mathcal{I}_{\text{DA},d,\tau}$. By substituting \eqref{functionxd} into \eqref{RTpolicy}, we can obtain the function between RT primal solution $\bm{z}_{d,\tau}^*$ at time $\tau=1$ and the forecast $\hat{\bm{y}}_d$.
\begin{equation}\label{functionz}
\begin{split}
    &f_{d,\tau}^z(\hat{\bm{y}}_d):=\bm{z}_{d,\tau}^{*}=\bm{G}_{\text{RT}}[\mathcal{J}_{\text{RT},d,\tau}^a]^{-1}\bm{F}_{\text{RT}}[\mathcal{J}^a_{\text{RT},d,\tau}]f_{d,\tau}^x(\hat{\bm{y}}_d)+\\ &\bm{G}_{\text{RT}}[\mathcal{J}_{\text{RT},d,\tau}^a]^{-1}\bm{\psi}_{\text{RT},d,\tau}[\mathcal{J}_{\text{RT},d,\tau}^a], \tau=1
\end{split}
\end{equation}

To obtain the function between RT primal solution $\bm{z}_{d,\tau}^*$ at time-slot $\tau=2,...,T$ and the forecast $\hat{\bm{y}}_d$, we need to obtain the function between DA solution $\bm{p}_{d,\tau-1}^*$, RT solution $\bm{p}_{d,\tau-1}^{+-*}$ and the forecast $\hat{\bm{y}}_d$ as well. Concretely, since $\bm{p}_{d,\tau-1}^*$ is a part of $\bm{x}_{d,\tau-1}^*$, let $\mathcal{I}_{\text{DA},d,\tau-1}^p$ be the row index set corresponds to $\bm{p}_{d,\tau-1}^*$ within $\bm{x}_{d,\tau-1}^*$. With \eqref{functionxd}, the function between $\bm{p}_{d,\tau-1}^*$ and the forecast $\hat{\bm{y}}_d$ is,
\begin{equation}\label{functionp}
 \begin{split}   &f^p_{d,\tau-1}(\hat{\bm{y}}_d):=\bm{p}_{d,\tau-1}^*=\Pi_{\mathcal{I}_{\text{DA},d,\tau-1}^p}(f_{d,\tau-1}^x(\hat{\bm{y}}_d)),\\
 &\forall \tau=2,...,T
 \end{split}
\end{equation}

Likewise, since $\bm{p}_{d,\tau-1}^{+-*}$ is a part of $\bm{z}_{d,\tau-1}^*$, let $\mathcal{I}_{\text{RT},d,\tau-1}^{+-}$ be the row index set corresponds to  $\bm{p}_{d,\tau-1}^{+-*}$ within $\bm{z}_{d,\tau-1}^*$. We can express the function of $\bm{p}_{d,\tau-1}^{+-*}$ w.r.t. $\hat{\bm{y}}_d$ as,
\begin{equation}\label{functionadj}
\begin{split}
    &f_{d,\tau-1}^{+-}(\hat{\bm{y}}_d):=\bm{p}_{d,\tau-1}^{+-*}=\Pi_{\mathcal{I}_{\text{RT},d,\tau-1}^{+-}}(f_{d,\tau-1}^z(\hat{\bm{y}}_d)),\\
    &\forall \tau=2,...,T
\end{split}
\end{equation}


Accordingly, by substituting  \eqref{functionxd}, \eqref{functionp}, \eqref{functionadj} into \eqref{RTpolicy2}, the function between $\bm{z}_{d,\tau}^*$ and forecast $\hat{\bm{y}}_d$ at $\tau=2,...,T$ is,
\begin{equation}\label{functionz2}
    \begin{split}
    &f^z_{d,\tau}(\hat{\bm{y}}_d):=\\
    &\bm{z}_{d,\tau}^*=\bm{G}_{\text{RT}}^{\prime}[\mathcal{J}_{\text{RT},d,\tau}^{ a}]^{-1}(\bm{\psi}^\prime_{\text{RT},d,\tau}[\mathcal{J}_{\text{RT},d,\tau}^{ a}]+\\
    &\bm{F}_{\text{RT}}^{\prime x}[\mathcal{J}_{\text{RT},d,\tau}^{ a}]f_{d,\tau}^x(\hat{\bm{y}}_d)+\bm{F}_{\text{RT}}^{\prime p}[\mathcal{J}_{\text{RT},d,\tau}^{ a}]f_{d,\tau-1}^p(\hat{\bm{y}}_d)+\\
    &\bm{F}_{\text{RT}}^{\prime +-}[\mathcal{J}_{\text{RT},d,\tau}^{ a}]f_{d,\tau-1}^{+-}(\hat{\bm{y}}_d)),\forall \tau=2,...,T
    \end{split} 
    \end{equation}

To sum up, the linear functions between the forecast $\hat{\bm{y}}_d$ and DA and RT primal solutions, which are defined in the neighborhood of $\hat{\bm{y}}_d$, are summarized in the \eqref{functionxd}-\eqref{functionz2}. 

\begin{figure}
  \centering
  \includegraphics[scale=0.55]{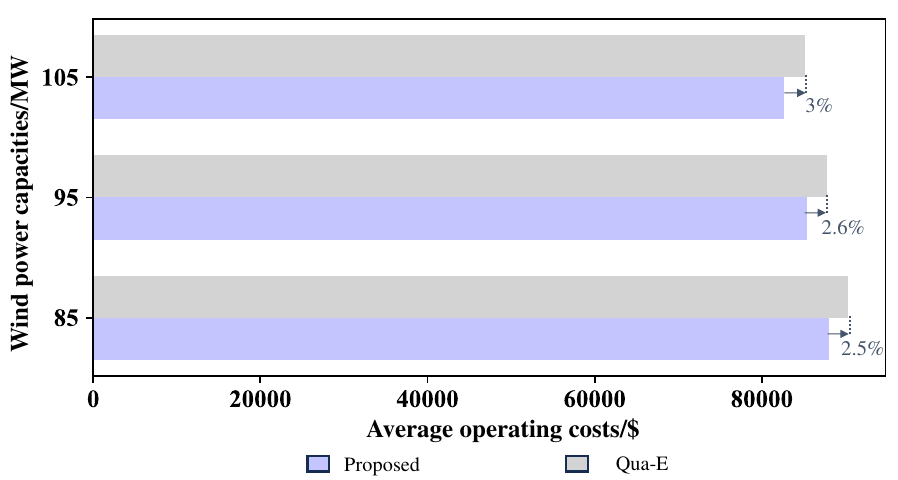}
  \caption{Average operating cost under different wind power capacities on the training set.}
\label{training}
\end{figure}

\begin{figure}
\centering
\includegraphics[scale=0.6]{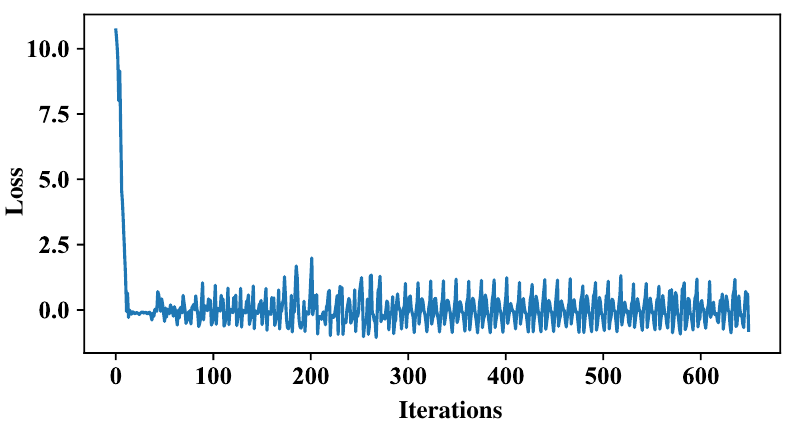}
\caption{The training convergence curve.}
\label{converge}
\end{figure}

\section{Results on the training set}\label{training results}

To demonstrate our model is not overfitting, we also report the results on the training set, which can be found in Fig. \ref{training}. The improvement achieved by the proposed method over Qua-E on the training set is similar to that observed on the test set in Fig. \ref{Capacities}. This demonstrates the model does not suffer overfitting. The results are reasonable since we train the forecasting model on a relatively large dataset with 7008 samples. Also, we use ResNet as the forecasting model, which avoids overfitting by using residual connections and allows the network to learn identity mappings and promote better gradient flow.

Additionally, to demonstrate the convergence of the forecasting model, we report the training convergence curve, which illustrates the change in the loss \eqref{valueloss} over training iterations. Since \eqref{valueloss} is a linear function composed of linear and constant parts, minimizing it is equivalent to minimizing the linear part. Therefore, the linear part itself can reflect the convergence process, which is shown in Fig. \ref{converge}. The loss decreases and converges rapidly over iterations.

\end{document}